\documentclass{article}
\usepackage{arxiv}

\usepackage[utf8]{inputenc}
\usepackage[T1]{fontenc}
\usepackage{booktabs}
\usepackage{amsfonts}
\usepackage{nicefrac}
\usepackage{microtype}
\usepackage{amsmath}
\usepackage{amsthm}
\usepackage{mathtools}
\usepackage{amssymb}
\usepackage{graphicx}
\usepackage{subcaption}
\usepackage{authblk}

\newtheorem{theorem}{Theorem}
\newtheorem{lemma}{Lemma}
\newtheorem{assumption}{Assumption}[section]

\title{Pulsating and rotating spirals in a delayed feedback diffractive nonlinear optical system}

\author[*]{Stanislav Budzinskiy}
\author[*]{Alexander Razgulin}
\affil[*]{Faculty of Computational Mathematics and Cybernetics, Lomonosov Moscow State University}

\begin{document}
\maketitle
\begin{abstract}
We study spiral waves in a mathematical model of a nonlinear optical system with a feedback loop. Starting from a delayed scalar diffusion equation in a thin annulus with oblique derivative boundary conditions, we shrink the annulus and derive the limiting equation on a circle. Based on the explicitly constructed normal form of the Hopf bifurcation for the one-dimensional delayed scalar diffusion equation, we make predictions about the existence and stability of two-dimensional spirals that we verify in direct numerical simulations, observing pulsating and rotating spiral waves.
\end{abstract}

\keywords{spirals \and nonlinear optics \and Kerr \and delay \and feedback \and Hopf bifurcation \and pattern formation \and thin domain \and oblique derivative}

\section{Introduction}
Spirals are common patterns to be observed across various areas of natural sciences, including the Belousov-Zhabotinskiy chemical reaction \cite{WinfreeSpiral1972, WinfreeGeometry2001}, reentrant excitation in cardiac tissues \cite{DavidenkoEtAlStationary1992}, and others. Among the first mathematical results were the analysis and description of logarithmic \cite{CohenEtAlRotating1978} and Archimedian \cite{GreenbergSpiral1980} spirals in the so-called $\lambda-\omega$ systems, which are simple mathematical models of reaction-diffusion processes; even though $\lambda-\omega$ systems are thought to have little correspondence with real physical systems, they appear naturally in asymptotic analysis of general reaction-diffusion equations when Hopf bifurcation happens \cite{CohenEtAlRotating1978, DuffyEtAlSpiral1980}. Asymptotic expressions for planar spirals in the far outer region were obtained in \cite{GreenbergPeriodic1976} on the basis of the dispersion relation for one-dimensional periodic travelling waves. This approach was picked up in \cite{MikhailovKrinskyRotating1983} where spirals in bounded and unbounded circular domains were considered as ensembles of independent one-dimensional periodic oscillators whose phases were connected via a spiral-shape-defining function. For a thorough review of perturbation approaches to spiral waves we address the reader to \cite{TysonKeenerSingular1988}.

In practice spirals tend to lose stability as the tip ceases to trace a circle; this leads to such phenomena as meandering and drifting spirals. It was established numerically in \cite{BarkleyEtAlSpiralwave1990} that meandering spirals bifurcate from rigidly rotating spirals via a Hopf bifurcation (an alternative view was discussed in \cite{GolubitskyEtAlMeandering1997}) and linear stability of the latter was studied in \cite{BarkleyLinear1992}. It was then proposed that the way rigidly rotating spirals lose stability could be explained by invoking the Euclidean symmetry of the plane \cite{BarkleyEuclidean1994}, which lead to an equivariant low-dimensional ODE model describing the appearance of meandering spirals \cite{BarkleyKevrekidisdynamical1994}. This ODE model served as a simple dynamical system describing the spiral dynamics in reaction-diffusion systems but the connection between them needed to be justified; this was achieved in \cite{SandstedeEtAlDynamics1997, SandstedeEtAlCentermanifold1997, SandstedeEtAlBifurcations1999, FiedlerEtAlBifurcation1996} with the use of center manifolds. Center manifolds were further employed for rigidly rotating spirals bifurcating from homogeneous equilibria \cite{ScheelBifurcation1998} and for the period-doubling instability phenomenon \cite{SandstedeScheelPeriodDoubling2007}. Meandering spirals were also studied in the kinematic framework \cite{MikhailovZykovKinematical1991, MikhailovEtAlComplex1994}.

The understanding of spiral spectra is of immense importance for stability analysis and was approached both numerically \cite{BarkleyLinear1992, WheelerBarkleyComputation2006} and analytically \cite{SandstedeScheelAbsolute2000, SandstedeScheelAbsolute2000a}. For instance, spiral spectra were studied on discs and on the whole plane, and it was shown that spectra on discs converge to the union of the absolute spectrum and point eigenvalues as radius tends to infinity \cite{SandstedeScheelAbsolute2000a,WheelerBarkleyComputation2006}.

Another concern is how spirals in reaction-diffusion systems respond to external perturbations, which are inevitable in real physical experiments. External periodic forcing was studied for reaction-diffusion equations themselves \cite{SteinbockEtAlControl1993, ZykovEtAlExternal1994} and for a qualitatively accurate low-dimensional ODE model \cite{MantelBarkleyPeriodic1996} to explain and describe resonant drift dynamics of spirals. A method based on response function of spiral waves was developed in a series of papers \cite{BiktashevaEtAlLocalized1998, BiktashevaBiktashevResponse2001, BiktashevaEtAlComputation2009}; it was then applied to investigate the drift dynamics \cite{BiktashevaDrift2000, BiktashevEtAlOrbital2010, BiktashevaEtAlComputation2010} and the wave-particle duality \cite{BiktashevaBiktashevWaveparticle2003, LanghamBarkleyNonspecular2013, LanghamEtAlAsymptotic2014} of spirals.

While external disturbances may appear on their own, they can also be voluntarily introduced with the aim to control the dynamics of the system: to sustain (or suppress) spiral waves, for example. Periodic external forcing is an example of an \textit{a priori} designed control; however the control may also depend on the state of the system. Control of spiral waves by means of delayed feedback was studied in \cite{GrillEtAlFeedbackControlled1995, ZykovEtAlControlling1997, ZykovEtAlGlobal2004, ZykovEngelFeedbackmediated2004, ZykovEtAlInterference2005}. Such feedback can be used, for instance, to suppress spiral waves in cardiac tissue \cite{PanfilovEtAlElimination2000}.

All the aforementioned papers\textemdash{}whether they treated spirals in unbounded domains or in more physically realistic bounded domains with zero-flux boundary conditions\textemdash{}worked with \textit{systems} of reaction-diffusion equations since it was only through interaction of several substances that Hopf bifurcation could occur. Meanwhile, eigenvalues could be 
forced to be complex by means of different mechanisms. In \cite{DellnitzEtAlSpirals1995}, the authors considered a scalar diffusion equation in a disc with `spiral` (essentially oblique derivative) boundary conditions. Due to the boundary conditions the Laplacian ceases to be self-adjoint and produces complex eigenvalues, making Hopf bifurcation possible. Robin boundary conditions could also be used for spiral and target patterns \cite{GolubitskyEtAlTarget2000}. It is suggested to understand these boundary conditions not as real boundary conditions but as effective boundary conditions or as matching conditions on the edge of the spiral's core.

Optical spirals are among the most vivid examples of self organization of light. They can be experimentally observed even in the simplest optical systems consisting of a thin liquid crystal light valve and a feedback loop, which can be all-optical or may contain digital control elements \cite{AkhmanovEtAlControlling1992}. As the feedback loop is exceptionally flexible in terms of its possible configurations, feedback optical systems are very versatile for studying complex nonlinear phenomena and at the same time are easily controllable. For instance, optical kaleidoscope systems can serve as models for `dry hydrodynamics`: pattern formation of light waves resembles\textemdash{}under certain conditions\textemdash{}complex hydrodynamical flows \cite{VorontsovEtAlSelfOrganization1998}.

Optical spirals can be excited even in one-component systems with feedback, but this requires a combination of local (diffusion and/or diffraction) and nonlocal (rotation of spatial arguments and/or time delay) interaction mechanisms in the feedback loop. Typically, diffusion and argument rotation are used to model spirals \cite{AdachiharaFaidTwodimensional1993, ZheleznykhEtAlRotating1994} yet it is also of interest to include diffraction and delay into the model (they are natural for systems with fast Kerr nonlinearities \cite{AkhmanovEtAlControlling1992}).

For a combination of diffusion and spatial rotation, spiral waves in a disc were suggested to inevitably decay into simpler multi-petal waves \cite{AdachiharaFaidTwodimensional1993}. This was later claimed to be a `wrong conclusion` in \cite{ZheleznykhEtAlRotating1994}, where the authors used the ideas of \cite{MikhailovKrinskyRotating1983} to describe spiral-wave solutions, whose stability, however, was not investigated; the authors only noted that spirals were not the naturally growing modes. This particular problem can be overcome with the help of spiral boundary conditions used in \cite{DellnitzEtAlSpirals1995}. In the absence of nonlocal interactions, though, the now-natural spirals in scalar diffusion equation are unstable with precisely one positive Floquet exponent (which is very close to zero, and so spirals live long in numerical simulations). Thus an additional interaction mechanism could be introduced into the system to push the zero eigenvalue of the Laplacian into the left half-plane.

In this paper, we treat the existence and stability of optical spirals in a thin annulus with spiral boundary conditions. The local interactions include diffusion in the nonlinear Kerr layer and diffraction in the feedback loop; nonlocal interactions are limited to the delay of the control signal in the feedback loop. The mathematical model is thus described by a delayed scalar functional differential diffusion equation. We propose an analytical approach to predict spiral excitation conditions in a thin annulus that is based on studying the Hopf bifurcation in the limiting (as the two-dimensional annulus shrinks) spatially one-dimensional problem on a circle. The latter can be completely understood, as the existence and stability conditions of one-dimensional rotating and standing waves can be expressed in closed form in terms of the physical parameters of the system. We take the theoretical predictions of the one-dimensional model as the basis for the two-dimensional predictions and verify them in direct numerical simulations, observing rigidly rotating and pulsating spiral waves.

\section{Description of the model}
The type of an optical system that we will be looking at consists of a ring cavity formed by a number of mirrors with different reflectivity and a thin layer of Kerr-nonlinear dieletric material. As light enters the resonator, it interacts with the dielectric inducing variations in its refractive index. Then the system can be described by a Debye-type relaxation equation for the phase modulation $u$ that light gets on passing through the dielectric layer \cite{VorontsovFirthPattern1994, AkhmanovEtAlControlling1992, ChesnokovEtAlTimedelayed2002}.

For a system with annular transverse aperture $\Omega$ the equation reads
\begin{equation}\label{eq:2d_model}
    \frac{\partial u}{\partial t} = -u(\rho, \theta, t) + D \Delta u + K \left| \mathcal{B}_{z_0} e^{\mathrm{i} u(t - T)} \right|^2, \quad r < \rho < R, \quad 0 \leq \theta < 2\pi,
\end{equation}
with periodic boundary conditions
\[
    u(\rho, 0, t) = u(\rho, 2\pi, t),\qquad \frac{\partial u}{\partial \theta}(\rho, 0, t) = \frac{\partial u}{\partial \theta}(\rho, 2\pi, t).
\]
Here, $\Delta$ is the Laplacian in polar coordinates. The nonlinear term is proportional to the intensity of the light field at the end of the feedback loop. In our case, a special time-delay device is installed in the ring cavity, and the light propagation operator $\mathcal{B}_{z_0}$ is modeled by a linear Schr\"odinger equation:
\begin{equation}\label{eq:2d_schrodinger}
    \mathcal{B}_{z_0}: A_0(\rho,\theta) \mapsto A(\rho, \theta, z_0; A_0), \qquad \frac{\partial A}{\partial z} + \mathrm{i} \Delta A = 0,\quad A(z=0) = A_0
\end{equation}
When light simply propagates within the annular aperture, Neumann boundary conditions are imposed in the radial direction
\begin{equation}\label{eq:bc_neumann}
    \frac{\partial u}{\partial \rho}(r, \theta, t) = 0,\qquad \frac{\partial u}{\partial \rho}(R, \theta, t) = 0.
\end{equation}
Keeping this as a reference case, we will look at a different type of boundary conditions that is of interest too, the oblique derivative boundary conditions,
\begin{equation}\label{eq:bc_oblique}
    r \frac{\partial u}{\partial \rho}(r, \theta, t) = \tan{\alpha_1} \frac{\partial u}{\partial \theta}(r, \theta, t),\qquad
    R \frac{\partial u}{\partial \rho}(R, \theta, t) = \tan{\alpha_2} \frac{\partial u}{\partial \theta}(R, \theta, t),
\end{equation}
which have two possible meanings: either they are induced on the inner and outer boundaries of the annulus by some special equipment, or they can be used as artificial boundary conditions to describe an annular slice of spiral waves. 

\section{Spiral excitation}
One way to excite spiral waves is through a Hopf bifurcation. To this end, we need to understand spectral properties of the linearized problem, impose Hopf bifurcation conditions on the characteristic values of the equation, reduce the system onto its low-dimensional center manifold, and compute the normal form, whose coefficients are responsible for the qualitative behavior of periodic solutions. Unfortunately, the cubic coefficient cannot be obtained explicitly in closed form, which makes it much more complicated to predict for certain parameters of the model whether stable waves can be excited. To obviate this, we can restrict ourselves to thin annuli and then exploit connections between the thin two-dimensional model and the one-dimensional model on the circle. The annulus being thin, we will assume \[
\alpha_1 = \alpha_2 = \alpha \neq 0.
\]

\subsection{Oblique derivative Laplacian in a thin annulus}
\label{subsection:laplacian}
It is well-known that bifurcation analysis depends prominently on the spectral properties of the linearized problem. For us, it is the Laplace operator in an annulus with oblique derivative boundary conditions. On separating variables, we can observe that the eigenvalues are defined by the zeros of the following cross-product of Bessel functions
\begin{multline*}
	g_{n}({\tan{\alpha}}, \kappa, \zeta) = [J_n'(\zeta)Y_n'(\kappa \zeta) - J_n'(\kappa \zeta)Y_n'(\zeta)] - \mathrm{i}\frac{{\tan{\alpha}}\cdot n}{\zeta}[J_n(\zeta)Y_n'(\kappa \zeta) - J_n'(\kappa \zeta)Y_n(\zeta)] -\\- \mathrm{i}\frac{{\tan{\alpha}}\cdot n}{\kappa \zeta}[J_n'(\zeta)Y_n(\kappa \zeta) - J_n(\kappa \zeta)Y_n'(\zeta)] - \frac{(\tan\alpha)^2 n^2}{\kappa \zeta^2}[J_n(\zeta)Y_n(\kappa \zeta) - J_n(\kappa \zeta)Y_n(\zeta)] = 0,
\end{multline*}
where $n \in \mathbb{Z}$ stands for the angular frequency index, ${\tan{\alpha}}$ measures obliqueness of the boundary condition, and $\kappa = R/r$ measures thickness of the domain. Let $\varepsilon = \kappa - 1$ be small; then
\begin{equation}\label{eq:zero_0}
    \zeta_{n,0}^\varepsilon({\tan{\alpha}}) = n\sqrt{(\tan\alpha)^2 + 1} \left[ 1 - \frac{\varepsilon}{2} - \left( \frac{5(\tan\alpha)^2 - 7}{(\tan\alpha)^2 + 1} + 4 \mathrm{i}\cdot{\tan{\alpha}}\cdot n \right)\frac{\varepsilon^2}{24} + \ldots \right]
\end{equation}
and
\begin{equation}\label{eq:zero_k}
    \zeta_{n,s}^\varepsilon({\tan{\alpha}}) = \frac{s\pi}{\varepsilon} + \frac{ \frac{4n^2 + 3}{8} + \mathrm{i}\cdot{\tan{\alpha}}\cdot n}{\varepsilon + 1}\left(\frac{s\pi}{\varepsilon}\right)^{-1} + \ldots
\end{equation}
with $s \in \mathbb{Z}^+$ the radial frequency index \cite{BudzinskiyZeros2019}. The corresponding eigenvalues are then given by $\lambda_{n,s}^\varepsilon({\tan{\alpha}}) = -(\zeta_{n,s}^\varepsilon({\tan{\alpha}}) / r)^2$. As is commonplace for Laplacians in thin domains, the eigenvalues can be separated in two groups: those that remain finite as the domain shrinks (those on the zeroth frequency in the thin dimension) and those that blow up. The following spectral convergence theorem then follows from \eqref{eq:zero_0} and \eqref{eq:zero_k}.
\begin{theorem}
Let $\lbrace \lambda_q^\varepsilon \rbrace_{q \in \mathbb{N}}$ be $\lbrace \lambda_{n,s}^\varepsilon \rbrace_{n \in \mathbb{Z}, s \in \mathbb{Z}^+}$ enumerated so that $|\lambda_q^\varepsilon| \leq |\lambda_{q+1}^\varepsilon|$. Then as $\varepsilon \longrightarrow 0$, they converge 
\[
\lambda_q^\varepsilon \longrightarrow \lambda_q^0 = -\frac{(\tan\alpha)^2 + 1}{r^2}q^2
\]
to the eigenvalues of scaled second derivative operator on a circle
\[
    \frac{(\tan\alpha)^2 + 1}{r^2} \frac{\partial^2}{\partial \theta^2}
\]
\end{theorem}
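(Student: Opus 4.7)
The plan is to separate the eigenvalues $\lambda_{n,s}^\varepsilon = -(\zeta_{n,s}^\varepsilon/r)^2$ into two groups based on the radial index and to exploit the expansions already recorded in \eqref{eq:zero_0} and \eqref{eq:zero_k}. For $s=0$, squaring \eqref{eq:zero_0} and dividing by $-r^2$ produces
\[
\lambda_{n,0}^\varepsilon(\tan\alpha) = -\frac{n^2((\tan\alpha)^2+1)}{r^2}\bigl(1 + O(\varepsilon)\bigr),
\]
so this family stays bounded and converges to $-n^2((\tan\alpha)^2+1)/r^2$ for every $n \in \mathbb{Z}$; these are exactly the eigenvalues of $((\tan\alpha)^2+1)/r^2 \cdot \partial_\theta^2$ on the circle with periodic boundary conditions, whose eigenfunctions $e^{\mathrm{i}n\theta}$ are indexed by the same integer $n$. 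For $s \geq 1$, the expansion \eqref{eq:zero_k} delivers a lower bound $|\zeta_{n,s}^\varepsilon| \geq s\pi/\varepsilon - C$ with $C$ independent of $n$ and of $s \geq 1$ for small $\varepsilon$, so $|\lambda_{n,s}^\varepsilon| \to \infty$ and this whole family leaves every compact set.

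Having classified the spectrum into a bounded $s=0$ part and a divergent $s\geq 1$ part, I would then address the enumeration by magnitude. Fix $Q \in \mathbb{N}$. For $\varepsilon$ sufficiently small every divergent eigenvalue has magnitude exceeding $(Q+1)^2((\tan\alpha)^2+1)/r^2$, so the first $2Q+1$ positions of $\{\lambda_q^\varepsilon\}$ are drawn exclusively from the bounded sub-family $\{\lambda_{n,0}^\varepsilon : |n| \leq Q\}$. Since the limits $|\lambda_{n,0}^0| = n^2((\tan\alpha)^2+1)/r^2$ are strictly monotone in $|n|$, continuity forces the magnitude ordering of the bounded eigenvalues at small $\varepsilon$ to match the ordering of their limits, which yields the claimed termwise convergence.

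The main obstacle is this last reindexing step. For each $n \neq 0$ the eigenvalues $\lambda_{n,0}^\varepsilon$ and $\lambda_{-n,0}^\varepsilon$ are complex conjugates, as can be read off from the sign of the imaginary coefficient $4\mathrm{i}\tan\alpha \cdot n$ in \eqref{eq:zero_0}, and therefore their magnitudes tie at leading order: the enumeration within each such doublet is intrinsically ambiguous. I would argue that this ambiguity is harmless because both doublet members share the same limit $-n^2((\tan\alpha)^2+1)/r^2$, and strict monotonicity in $|n|$ rules out tie-breaking across different values of $|n|$ for $\varepsilon$ small enough; hence any admissible enumeration produces the sequence $\lambda_q^0 = -((\tan\alpha)^2+1)q^2/r^2$ in the limit.
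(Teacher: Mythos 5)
Your argument is correct and follows essentially the same route as the paper, which offers no proof beyond the remark that the theorem ``follows from \eqref{eq:zero_0} and \eqref{eq:zero_k}'': you read off that the $s=0$ zeros give eigenvalues converging to $-n^2((\tan\alpha)^2+1)/r^2$ while the $s\geq 1$ zeros produce eigenvalues escaping to infinity, and then handle the magnitude ordering. Your explicit treatment of the reindexing and of the conjugate doublets $\lambda_{\pm n,0}^\varepsilon$ is in fact more careful than what the paper records.
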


\subsection{Bifurcation analysis of the two-dimensional problem}
In our quest for bifurcating spirals we aim to leap promptly to the limit problem; nonetheless, it needs to be checked whether the two-dimensional problem has a center manifold and whether the normal form on it describes the dynamics of the system. To this end we refer to \cite{FariaNormal2000}, where these questions are answered positively given that the linearized operator acting on $u(t)$ generates a "good" semigroup and that the linearized operator acting on $u(t - T)$ is bounded.

The "goodness" of the former is established in
\begin{theorem}[Appendix \ref{appendix:1}]
\label{theorem:2}
The linear operator associated with the oblique derivative boundary value problem in an annulus, acting in $L^2(\Omega)$ and defined on the subspace of $H^2(\Omega)$ of functions satisfying the boundary conditions, generates an immediately compact $C_0$-semigroup.
\end{theorem}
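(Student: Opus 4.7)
The plan is to combine a direct dissipativity estimate with the Lumer--Phillips theorem for $C_0$-semigroup generation, and then upgrade to analyticity and immediate compactness via a numerical range bound and compactness of the resolvent.

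First I would verify dissipativity of $Lu = -u + D\Delta u$ on its natural domain. Pairing $\langle Lu, u\rangle$ in $L^2(\Omega)$ and integrating by parts gives
\[
\langle Lu, u\rangle = -\|u\|_{L^2}^2 - D\|\nabla u\|_{L^2}^2 + D\int_{\partial\Omega}\partial_\nu u\,\overline{u}\,ds,
\]
and substituting \eqref{eq:bc_oblique} recasts the boundary integral as
\[
D\tan\alpha \int_0^{2\pi}\bigl[\partial_\theta u(R,\theta)\overline{u(R,\theta)} - \partial_\theta u(r,\theta)\overline{u(r,\theta)}\bigr]d\theta.
\]
The real part of the integrand on each circle equals $\tfrac{1}{2}\partial_\theta|u|^2$, whose $\theta$-integral vanishes by periodicity. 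Thus $\mathrm{Re}\langle Lu, u\rangle = -\|u\|_{L^2}^2 - D\|\nabla u\|_{L^2}^2$, and $L$ is dissipative.

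Second, to invoke Lumer--Phillips I would establish surjectivity of $\lambda I - L$ for some $\lambda > 0$, which reduces to solvability of the elliptic BVP $(1+\lambda)u - D\Delta u = f$ under the oblique boundary conditions. The normal derivative enters those conditions with nonzero coefficient ($r, R > 0$), so the Shapiro--Lopatinskii covering condition holds and the Agmon--Douglis--Nirenberg theory applies: it supplies the regularity estimate $\|u\|_{H^2}\leq C(\|f\|_{L^2}+\|u\|_{L^2})$, closedness of $L$, and Fredholmness of index zero for $\lambda I - L$. The dissipativity identity rules out a nontrivial kernel, so Fredholm gives surjectivity; Lumer--Phillips then delivers a $C_0$-semigroup of contractions with $D(L)\subset H^2(\Omega)$.

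Finally I would upgrade the semigroup to an analytic, immediately compact one. Using the $H^{1/2}$--$H^{-1/2}$ duality on each boundary circle together with the trace theorem, the boundary integral is controlled by
\[
|\mathrm{Im}\langle Lu, u\rangle| \leq C|\tan\alpha|\,\|u\|_{H^{1/2}(\partial\Omega)}^2 \leq C'\|u\|_{H^1(\Omega)}^2 \leq \frac{C'}{\min(1,D)}\,|\mathrm{Re}\langle Lu, u\rangle|,
\]
which places $W(L)$ inside a sector in the left half-plane; combined with the already-established right half-plane in $\rho(L)$ this gives sectoriality, hence analyticity. Since $H^2(\Omega)\hookrightarrow L^2(\Omega)$ compactly by Rellich--Kondrachov, $L$ has compact resolvent, and an analytic semigroup with compact resolvent is compact for every $t>0$. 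The main technical point is verifying Shapiro--Lopatinskii for the oblique derivative operator (so that ADN elliptic theory applies in our non-standard boundary setting) and then carefully reading the boundary integral of $\partial_\theta u\,\overline{u}$ as a duality pairing so it can be absorbed into the $H^1$-controlled real part of the numerical range.
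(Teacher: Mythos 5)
Your proof is correct, but it follows a genuinely different route from the paper's. The paper assembles the result from Agmon's general elliptic theory: it cites Agmon (1962) for closedness, compact resolvent, and minimal-growth resolvent estimates along every ray except $\mathbb{R}_+$ (Lemma \ref{lemma:A1}), localizes the \emph{spectrum} in the sector $|\arg\lambda|\leq|\tan\alpha|$ via a divergence identity with the matrix $Q_\alpha$ applied to normalized eigenfunctions (Lemma \ref{lemma:A2}), and then needs a Phragm\'en--Lindel\"of argument to upgrade the ray-wise resolvent constants to a uniform sectorial bound (Lemma \ref{lemma:A3}). You instead prove generation by Lumer--Phillips (dissipativity plus ADN/Shapiro--Lopatinskii for the Fredholm alternative and the $H^2$ a priori estimate) and obtain analyticity from a bound on the entire \emph{numerical range}, not just the spectrum: your boundary computation is essentially the same identity as Lemma \ref{lemma:A2} (the real part of $\partial_\theta u\,\overline{u}$ integrates to zero by periodicity; the imaginary part is controlled by the trace in $H^{1/2}$ and hence by $\|u\|_{H^1(\Omega)}^2$, which the real part dominates), but because it holds for all $u$ in the domain, the standard numerical-range criterion (a sectorial numerical range together with one resolvent point in the complement) yields the uniform sectorial resolvent estimate directly. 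This buys you a more self-contained functional-analytic argument that bypasses both Agmon's resolvent estimates and the Phragm\'en--Lindel\"of step; the trade-off is that your sector half-angle involves the trace constant and is less explicit than the paper's $|\alpha|$, and you still need the same elliptic regularity input (Shapiro--Lopatinskii holds here because the oblique direction is nowhere tangent, the normal component being $r$ or $R\neq 0$). The final step --- compact resolvent from $H^2\hookrightarrow\hookrightarrow L^2$ plus norm continuity of the analytic semigroup implies immediate compactness --- coincides with the paper's.
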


The boundness of the latter appears to be a more delicate issue (for our particular model), and we leave its rigorous treatment for future papers. Instead, we present an informal discussion of the subject in Appendix \ref{appendix:1}.

Though we have not proved our problem to agree with all the assumptions made in \cite{FariaNormal2000}, it seems reasonable to consider pathologies unlikely.

\subsection{Bifurcation analysis of the limit problem}
The one-dimensional model on a circle is given by 
\begin{equation}
    \frac{\partial u}{\partial t} = -u(\theta, t) + \Tilde{D} \frac{\partial^2 u}{\partial \theta^2} + K \left| \mathcal{B}_{\tilde{z}_0} e^{\mathrm{i} u(t - T)} \right|^2
\end{equation}
with periodic boundary conditions
\begin{equation}
    u(0, t) = u(2\pi, t),\qquad \frac{\partial u}{\partial \theta}(0, t) = \frac{\partial u}{\partial \theta}(2\pi, t).
\end{equation}
The Schr\"odinger operator is changed accordingly. Here, the diffusion and diffraction coefficients are scaled as
\[
    \tilde{D} = \frac{(\tan{\alpha})^2 + 1}{r^2}D, \qquad \tilde{z}_0 = \frac{(\tan{\alpha})^2 + 1}{r^2} z_0.
\]
This model was scrutinized in \cite{BudzinskiyRazgulinRotating2017, BudzinskiyRazgulinNormal2017} so we will outline its bifurcation analysis while skipping the technical details.

To study periodic wave solutions bifurcating from a steady state, we localize the problem about the constant solution $u = K$. The nonlinearity parameter $K$ will also serve as the bifurcation parameter so we represent it as $K = \hat{K} + \mu$ with $\mu$ small. We then rewrite the boundary value problem as an abstract dynamical system in the phase space $C\left([-T,0]; H^2_\mathrm{periodic}[0,2\pi] \right)$ of Sobolev-space-valued continuous functions on the delay interval; similar phase spaces are standard in the theory of functional differential equations since the initial data need to be specified on the whole interval.

Next we formulate the Hopf bifurcation conditions that a pair of characteristic values crosses the imaginary axis with nonzero speed at $\mu = 0$. The characteristic equation is understood in the same sense as in the theory of ODEs: we probe the linear part of the equation with exponentials 
\[
h(\theta)e^{\lambda\xi} \in C\left([-T,0]; H^2_\mathrm{periodic}[0,2\pi] \right),
\]
and a characteristic value is then a solution of
\[
-h + \tilde{D} h'' - 2 \hat{K} \mathrm{Im}\left[\mathcal{B}_{\tilde{z}_0}h\right] e^{-\lambda T} = \lambda h, \qquad \lambda \in \mathbb{C}.
\]
\begin{assumption}{(Hopf)}
    There is only one pair of complex conjugate imaginary characteristic values $\lambda_* = \pm \mathrm{i} \nu_* \neq 0$.
\end{assumption}
On expanding $h(\theta)$ into Fourier series 
\[
    h(\theta) = \sum_{n \in \mathbb{Z}}h_n e^{\mathrm{i} n \theta}
\]
we get that either $h_n = 0$ or
\begin{align*}
    &\nu_* = 2\hat{K}\sin(n^2 \tilde{z}_0)\sin(\nu_* T),\\
    &1 + \tilde{D}n^2 + 2\hat{K}\sin(n^2 \tilde{z}_0)\cos(\nu_* T) = 0.
\end{align*}
Since $\nu_* \neq 0$ and $\tilde{D} > 0$, the latter can be satisfied only for a single pair of $n = \pm n_*$. This means that the characteristic values $\lambda_* = \pm \mathrm{i} \nu_*$ are double-degenerate, and this was in fact already known since the equation is $O(2)$-equivariant. Note that the transversality condition is met automatically \cite{BudzinskiyRazgulinNormal2017}.

For the $O(2)$-equivariant Hopf bifurcation the center subspace is four-dimensional and is spanned by 
\[
    \mathrm{span}\lbrace e^{\mathrm{i}(n_*\theta + \nu_* \xi)}, e^{\mathrm{i}(n_*\theta - \nu_* \xi)}, e^{-\mathrm{i}(n_*\theta + \nu_* \xi)}, e^{-\mathrm{i}(n_*\theta - \nu_* \xi)} \rbrace.
\]
The solution is then approximately given by a sum of counter-propagating rotating waves
\[
    u(\theta, \xi) = \hat{K} + \mu + \left( \eta_1 e^{\mathrm{i}(n_*\theta + \nu_* \xi)} + \eta_2 e^{\mathrm{i}(n_*\theta - \nu_* \xi)} + \ldots \right) + \mathrm{c.c}
\]
If we let $\eta_1 = p_1 e^{\mathrm{i}\omega_1}$ and $\eta_2 = p_2 e^{\mathrm{i}\omega_2}$ then the normal form of the $O(2)$-equivariant Hopf bifurcation is
\begin{align*}
    &\dot{p_1} = p_1 \left( A_1 \mu + A_2^{(1)}p_1^2 + A_2^{(2)}p_2^2 \right) + \mathcal{O}(p_1\mu^2 + |(p_1, p_2, \mu)|^4)\\
    &\dot{\omega_1} = \nu_* + \mathcal{O}(|(p_1, p_2, \mu)|)\\
    &\dot{p_2} = p_2 \left( A_1 \mu + A_2^{(1)}p_2^2 + A_2^{(2)}p_1^2 \right) + \mathcal{O}(p_2\mu^2 + |(p_1, p_2, \mu)|^4)\\
    &\dot{\omega_2} = \nu_* + \mathcal{O}(|(p_1, p_2, \mu)|),
\end{align*}
where the coefficients $A_1(\tilde{D}, T, n_*, \nu_*, \tilde{z}_0, \hat{K})$, $A_2^{(1)}(\tilde{D}, T, n_*, \nu_*, \tilde{z}_0, \hat{K})$, and $A_2^{(2)}(\tilde{D}, T, n_*, \nu_*, \tilde{z}_0, \hat{K})$ are known explicitly and in closed form \cite{BudzinskiyRazgulinNormal2017}, and they define the qualitative behavior of the system. In the supercritical case of $\mu > 0$, the most interesting phase portraits correspond to 
\[
A_1 > 0, \qquad A_2^{(1)} < 0, \qquad A_2^{(1)} + A_2^{(2)} < 0.
\]
Then three types of periodic solutions exist: a clockwise rotating wave, a counter-clockwise rotating wave, and a standing wave. Based on the sign of $A_2^{(1)} - A_2^{(2)}$, two cases are possible: either rotating waves are stable, or the standing wave is (see Figure \ref{fig:phase_portrait}).
\begin{figure}
    \centering
    \includegraphics[width=0.7\linewidth]{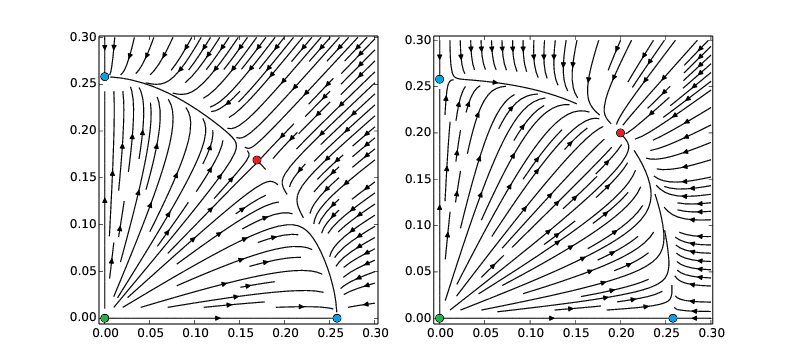}
    \caption{Possible phase portraits of the $O(2)$-equivariant Hopf bifurcation normal form in the $(p_1, p_2)$ coordinates. The dots correspond to: (green) constant solution; (blue) rotating waves; (red) standing wave.}
    \label{fig:phase_portrait}
\end{figure}

\subsection{Application of one-dimensional analysis}
In terms of dynamics, reaction-diffusion equations on thin domains with Neumann boundary conditions in the thin dimension are closely related with their lower-dimensional limits. In the seminal paper \cite{HaleRaugelReactiondiffusion1992} it was proved that attractors are upper semicontinuous as the thin domain shrinks. Persistence of the bifurcation structure was shown for elliptic equations in thin domains \cite{KanPersistence2010}. 

In our previous paper \cite{BudzinskiyEtAlReducing2018} we applied this idea to model two-dimensional bifurcating periodic wave solutions in a delayed feedback nonlinear optical system (despite the theory of functional differential equations on thin domains not being developed yet). Unlike the one-dimensional model on a circle, in the two-dimensional case with Neumann boundary conditions the coefficients of the normal form cannot be computed explicitly even though we know that it has the same structure corresponding to the $O(2)$-equivariant Hopf bifurcation. So we tried to choose the parameters of the two-dimensional system in such a way that its limit one-dimensional problem would have orbitally stable rotating/standing waves, and indeed we managed to excite them in the thin annulus. 

It was noted in \cite{RaugelDynamics1995} that similar behavior is expected from equations with oblique derivative boundary conditions in the thin dimension. However, there are certain nuances. In the limit, the problem acquires an additional symmetry and so goes from $SO(2)$- to $O(2)$-equivariancy. The normal form of the $SO(2)$-equivariant Hopf bifurcation is just the standard Hopf bifurcation and consists of 2 equations; so there are only two coefficients that define the qualitative properties of solutions. Meanwhile \textemdash{} as in the two-dimensional Neumann problem \textemdash{} we cannot evaluate both of them explicitly, and the situation is further worsened since the oblique Laplacian is non-self-adjoint. Thus we decide to use a more complicated normal form of the $O(2)$-equivariant Hopf bifurcation of the limit problem as the source of knowledge about stability of two-dimensional waves.

\section{Numerical experiments}
To excite spiral waves, we will use two sets of parameters that are in Table \ref{tab:params} by scaling them according to the oblique angle and inner radius of the annulus. They guarantee existence and orbital stability of rotating or standing waves, respectively, in the one-dimensional limit problem on a circle. For our experiments we will use $r = 1$, $R = 1.1$, perturbation of the nonlinearity parameter $\mu = 0.1$, and $\alpha \in \lbrace 0, \arctan(\nicefrac{3}{4})\rbrace$; comparing Neumann and oblique cases, we will vividly see how the introduction of obliqueness affects the dynamics of the system.

\begin{table}[h]\centering
\renewcommand{\arraystretch}{1.2}
\begin{tabular}{@{}lllllll@{}}\toprule
    & \multicolumn{6}{c}{\textbf{Parameter values}}\\ \cmidrule{2-7}
    \textbf{Wave type} & $\hat{K}$ & $\tilde{D}$ & $T \, (\tau)$ & $\tilde{z}_0$ & $n_*$ & $\nu_*$ \\ \midrule
    Rotating & 3.286 & 0.068 & 1.817 & 0.0158 & 7 & 1.540 \\
    Standing & 3.662 & 0.214 & 0.592 & 0.06   & 4 & 4.052 \\
    \bottomrule\\
\end{tabular}
\caption{Parameters of the limiting one-dimensional problem on a circle that meet the requirements of the Hopf bifurcation, asymptotic stability of the center manifold, and orbital asymptotic stability of rotating/standing waves.}
\label{tab:params}
\end{table}
To start a simulation, it remains to prescribe the initial data for our initial-boundary value problem, which \textemdash{} owing to the delay $T$ \textemdash{} needs to be given on the whole $[-T,0]$ interval. As the basic waveform we will choose 
\[
    \cos(n (\tan{\alpha}\ln{\rho} + \theta) + \nu t),
\]
which meets the boundary conditions. We shall consider two types of initial data: a pure rotating spiral and a combination of counterpropagating spirals. See Table \ref{tab:ICparams} for more details.
\begin{table}[h]\centering
\renewcommand{\arraystretch}{1.2}
\begin{tabular}{@{}ll@{}}\toprule
    \textbf{Spiral type} & \textbf{Expression} \\ \midrule
    Basic waveform $\mathcal{V}(\alpha, n, \nu)$ & $\cos(n (\tan{\alpha}\ln{\rho} + \theta) + \nu t)$ \\
    Outward rotating & $0.4 \mathcal{V}(\alpha, n_*, \nu_*)$ \\
    Combination of inward and outward & $0.15\mathcal{V}(\alpha, n_*, \nu_*) + 0.25\mathcal{V}(\alpha, n_*, -\nu_*)$ \\
    \bottomrule\\
\end{tabular}
\caption{Different initial conditions to be imposed on $[-T,0]$ for numerical simulations. The basic waveform $\mathcal{V}(\alpha, n, \nu)$ satisfies the oblique derivative boundary conditions with angle $\alpha$. Assuming here that $\alpha \geq 0$, $n_* > 0$, and $\nu_* > 0$.}
\label{tab:ICparams}
\end{table}
To illustrate the behavior of a solution, we will present its snapshots. Each snapshot covers a time interval $[t_1, t_2]$ and shows the evolution of the angular slice in the middle of the annulus $v(\frac{R + r}{2}, \theta, t)$ and the plot of the solution at the end of the time interval $v(\rho, \theta, t_2)$. See Appendix for the numerical method.

\subsection{Rotating parameters}
Let us look at the system with the parameters that give orbitally stable rotating waves in the one-dimensional limit problem (see Table \ref{tab:params}). At first we verify the existence of spiral-wave solutions by commencing the evolution with a pure outward rotating spiral initial condition (see Table \ref{tab:ICparams}). As Figures \ref{fig:rotating_pure_neumann} and \ref{fig:rotating_pure_oblique} demonstrate for $\alpha \in \lbrace 0, \arctan(\nicefrac{3}{4})\rbrace$, the solutions keep their initial form and do not decay. In Figure \ref{fig:rotating_pure_oblique}, the angle-radius plot at the end of the time interval consists of slanting bars, which signify the presence of phase difference between inner and outer circles; the wave is rotating clockwise, hence the spiral is outward. On the contrary, the phase difference is not present in Figure \ref{fig:rotating_pure_oblique} because of the Neumann boundary conditions.
\begin{figure}[t]
    \begin{subfigure}[b]{0.48\textwidth}
        \centering
        \includegraphics[width=\linewidth]{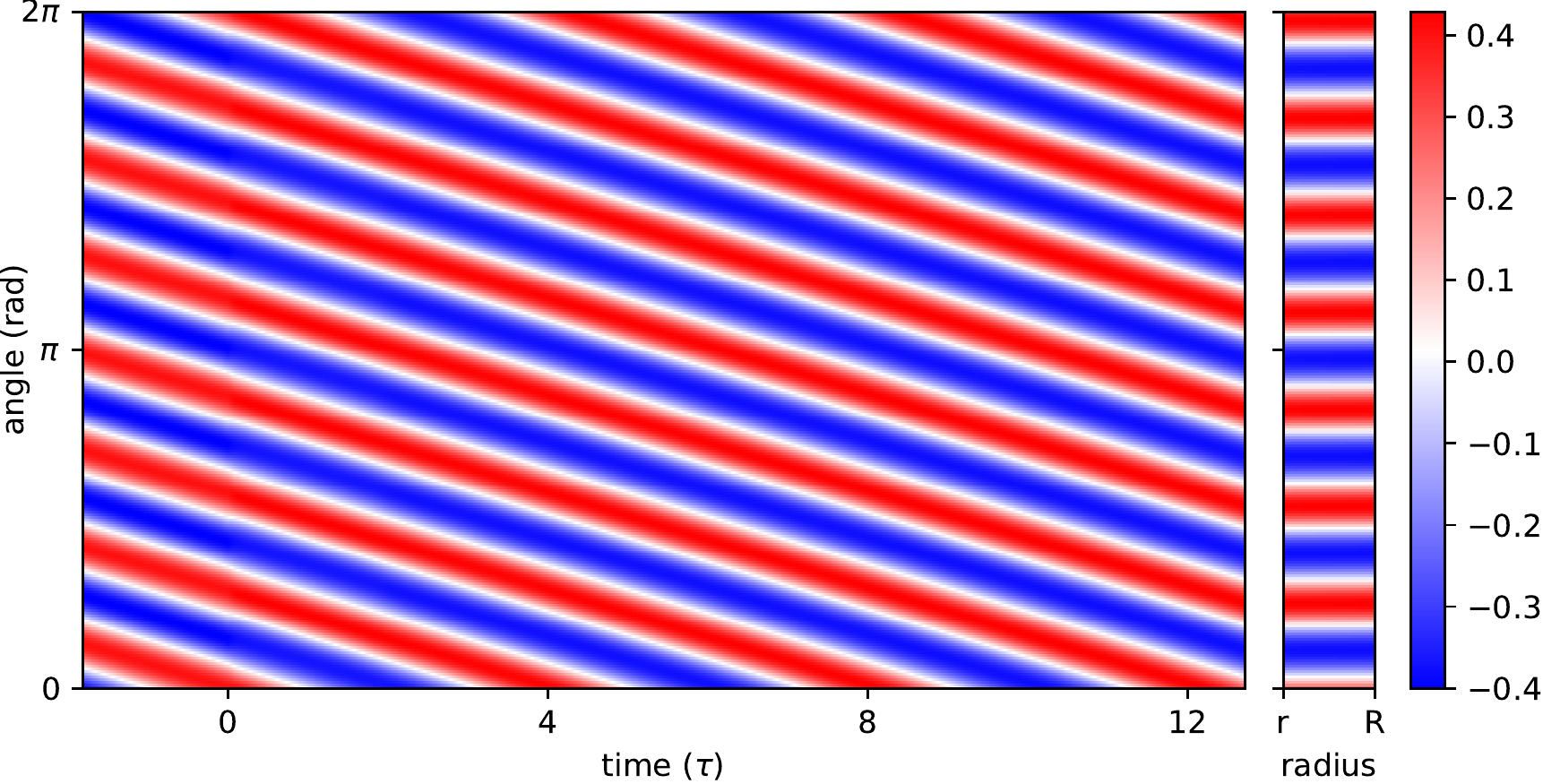}
    \end{subfigure}\hfill%
    \begin{subfigure}[b]{0.48\textwidth}
        \centering
        \includegraphics[width=\linewidth]{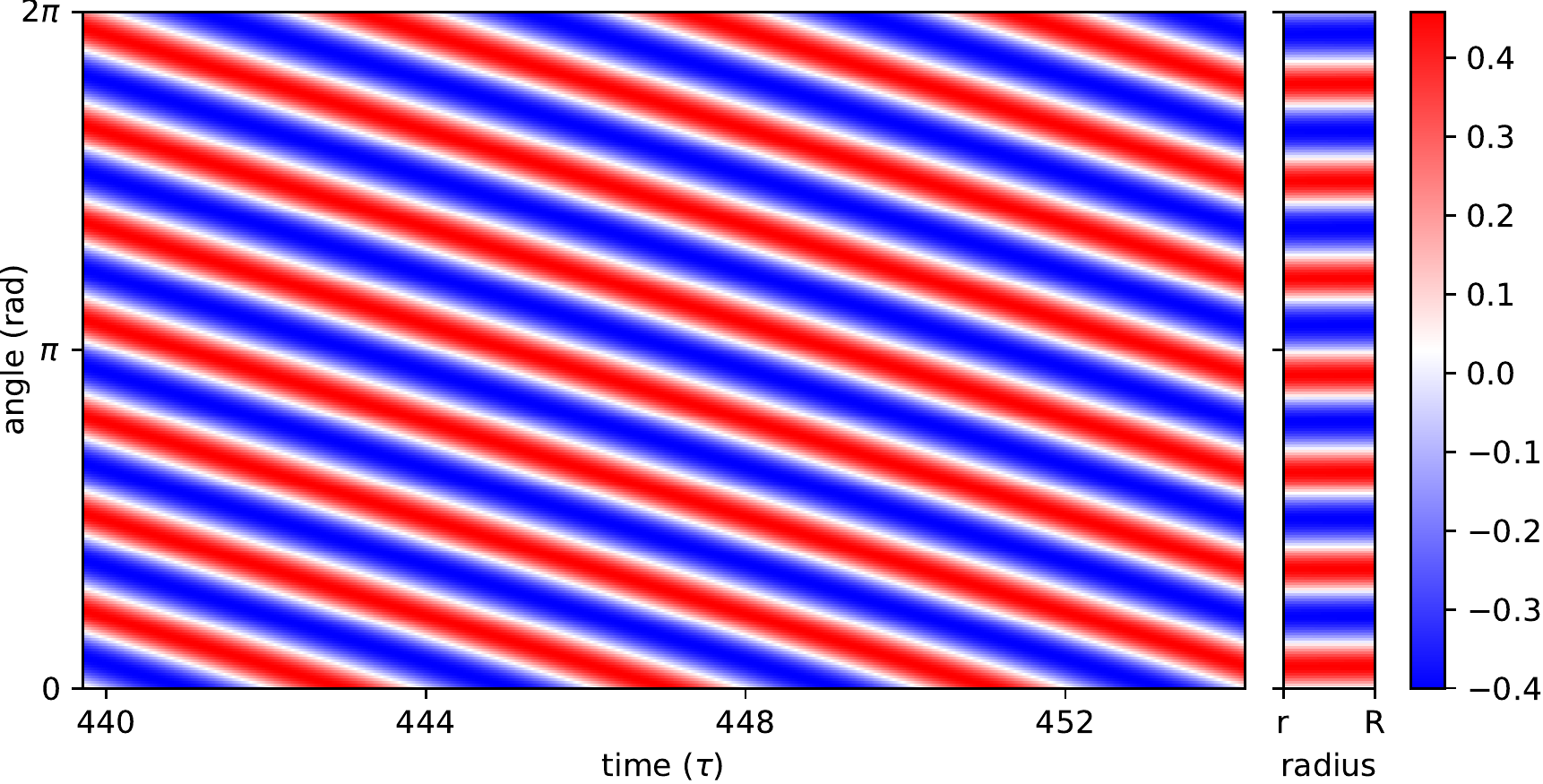}
    \end{subfigure}
    
    \caption{Simulation with rotating parameters (Table \ref{tab:params}), Neumann boundary conditions $\alpha = 0$, and pure rotating spiral initial data (Table \ref{tab:ICparams}). Showing snapshots at $[-T,7T]$ and $[242T, 250T]$.}
    \label{fig:rotating_pure_neumann}
\end{figure}
\begin{figure}[t]
    \begin{subfigure}[b]{0.48\textwidth}
        \centering
        \includegraphics[width=\linewidth]{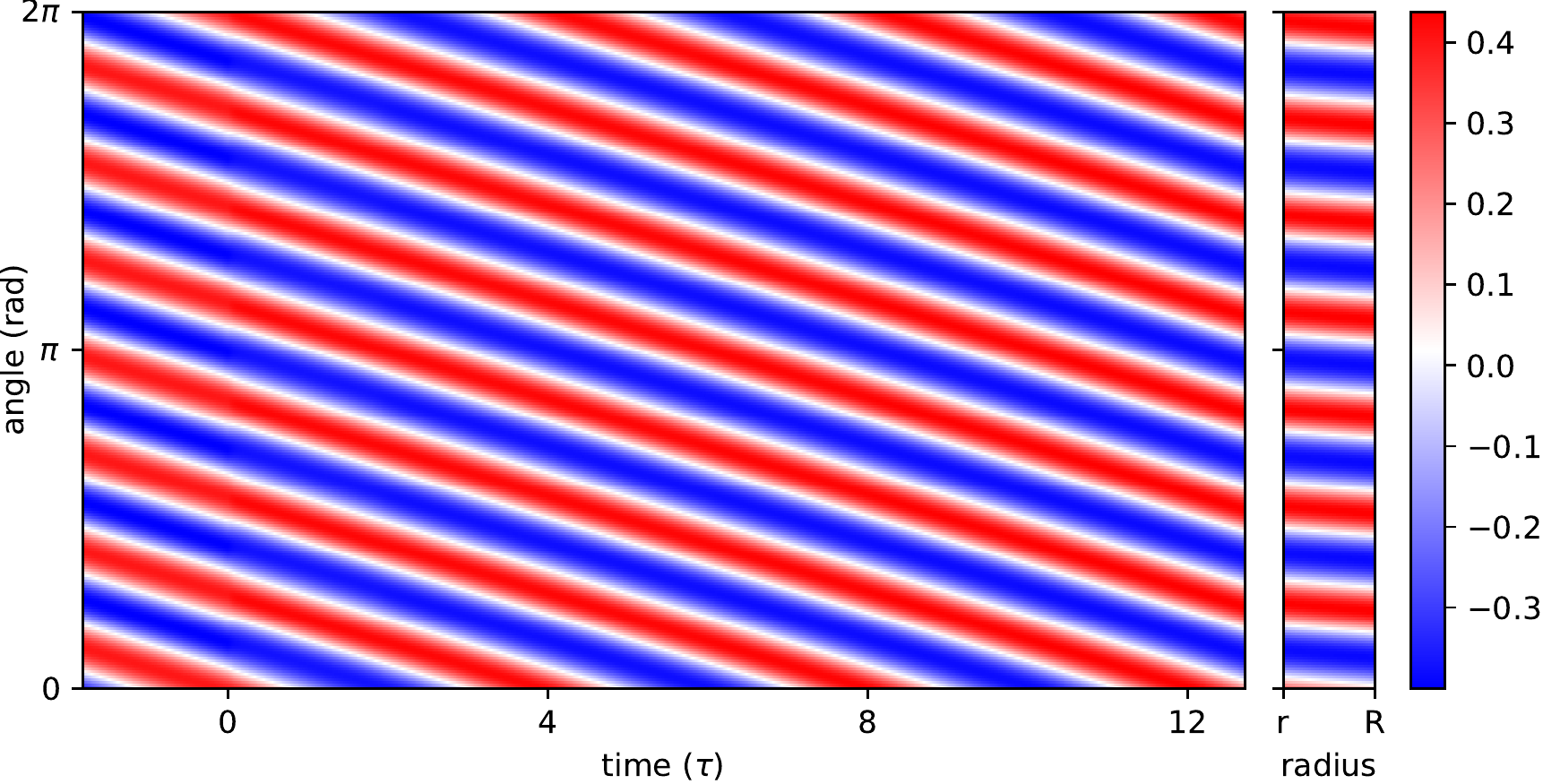}
    \end{subfigure}\hfill%
    \begin{subfigure}[b]{0.48\textwidth}
        \centering
        \includegraphics[width=\linewidth]{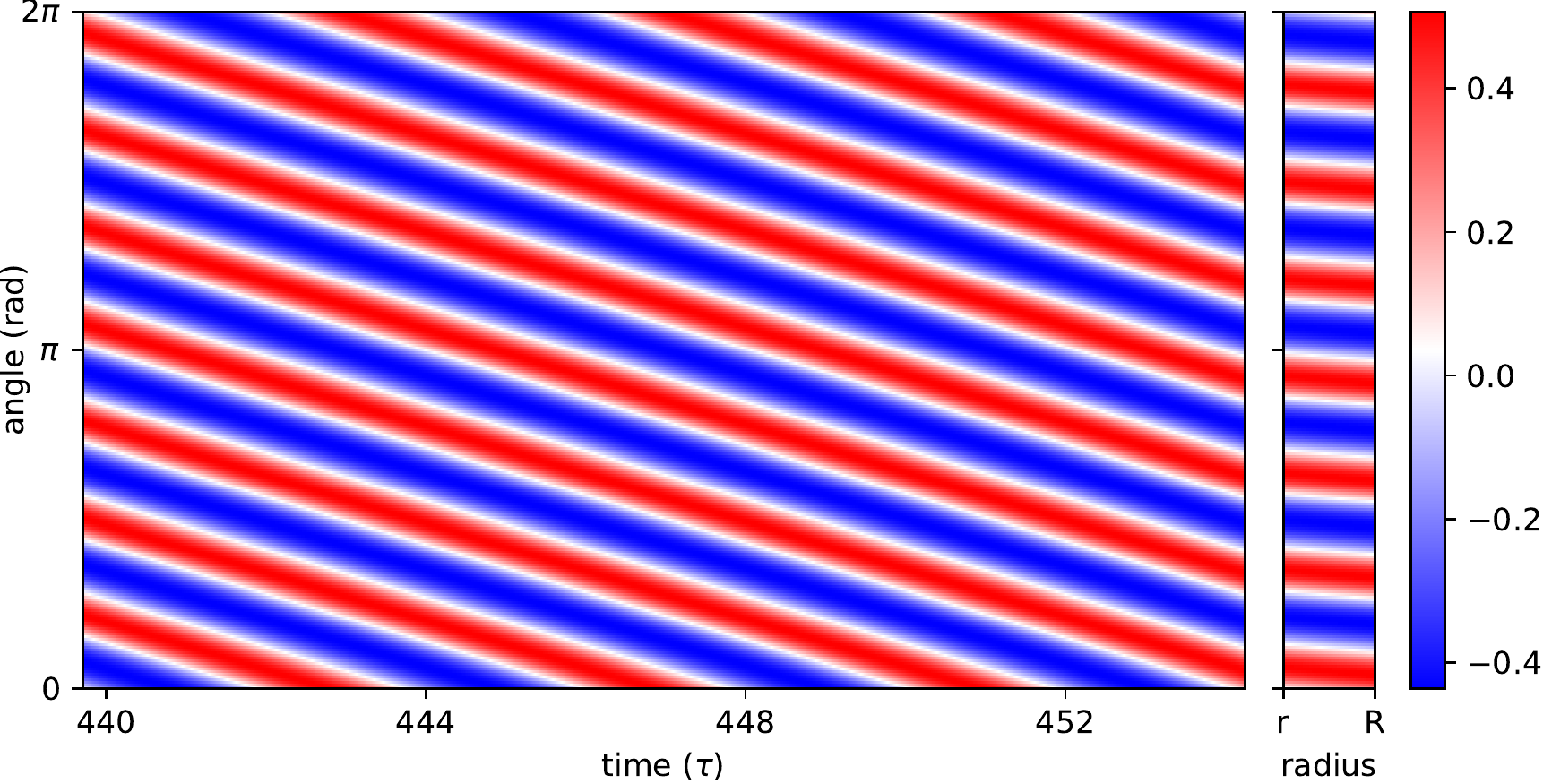}
    \end{subfigure}
    
    \caption{Simulation with rotating parameters (Table \ref{tab:params}), oblique boundary conditions $\alpha = \arctan(\nicefrac{3}{4})$, and pure rotating spiral initial data (Table \ref{tab:ICparams}). Showing snapshots at $[-T,7T]$ and $[242T, 250T]$.}
    \label{fig:rotating_pure_oblique}
\end{figure}

Next, let us turn to stability properties of the spirals. To this end we set the initial data as a sum of inward and outward rotating spirals, where the inward one has a higher amplitude (see Table \ref{tab:ICparams}). The results are shown in Figures \ref{fig:rotating_mix_neumann} and \ref{fig:rotating_mix_oblique}. When $\alpha = 0$, the system is $O(2)$-equivariant and so has no preferable direction of rotation. It exhibits winner-takes-all dynamics: the wave with the higher amplitude snowballs its dominance to dwarf its counter-propagating rival wave and make it disappear. That is, the solution evolves into a clockwise rotating wave in Figure \ref{fig:rotating_mix_neumann}. At the same time, the oblique problem can tell clockwise and counterclockwise directions apart, favoring the counterclockwise, which corresponds to the outward rotating spiral. So in Figure \ref{fig:rotating_mix_oblique}, the solution rotates counterclockwise in the end. This demonstrates that winner-takes-all dynamics is no longer present in the oblique problem and that the outward spiral is attractive.

\begin{figure}[h]
    \begin{subfigure}[b]{0.48\textwidth}
        \centering
        \includegraphics[width=\linewidth]{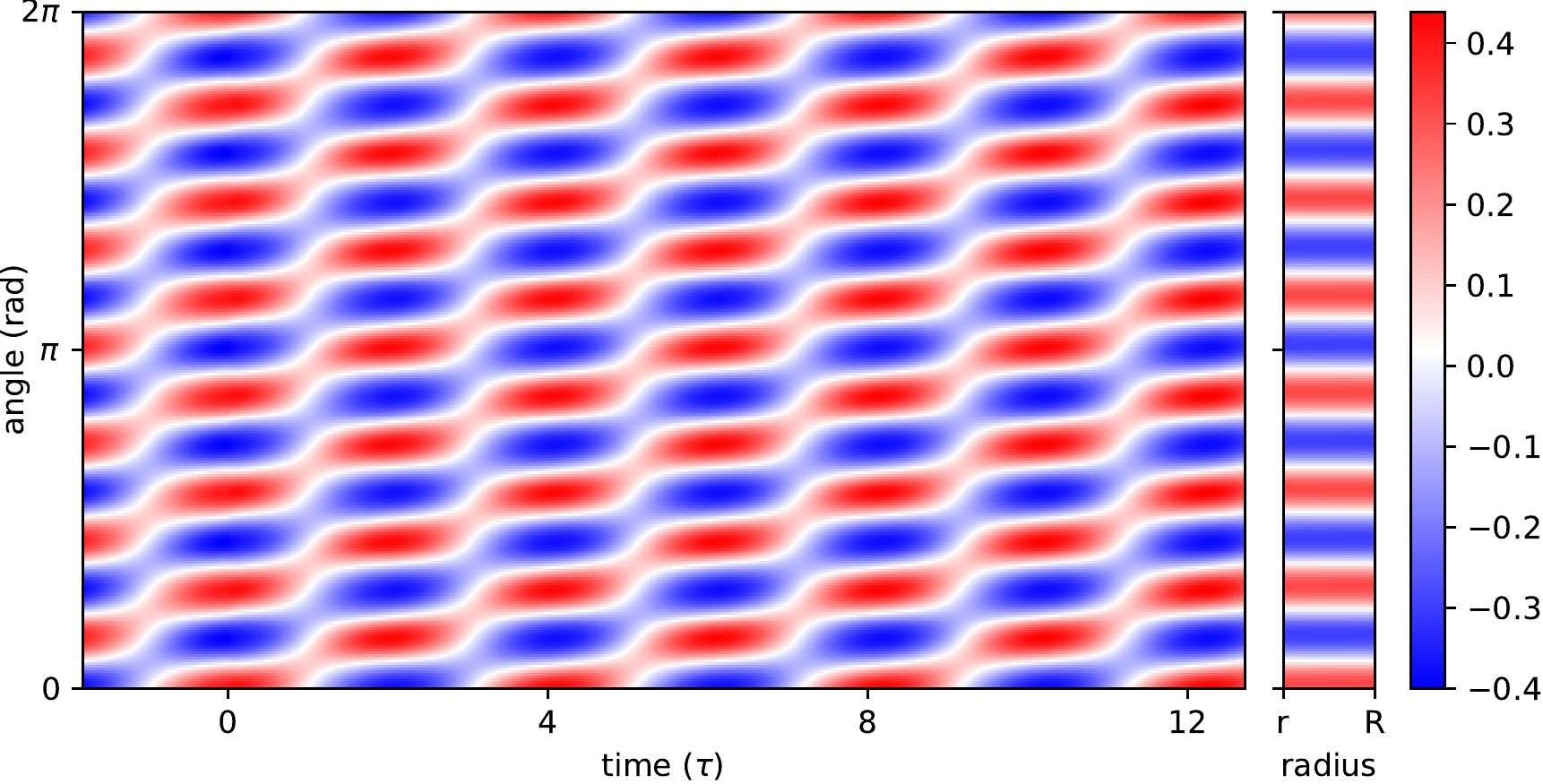}
    \end{subfigure}\hfill%
    \begin{subfigure}[b]{0.48\textwidth}
        \centering
        \includegraphics[width=\linewidth]{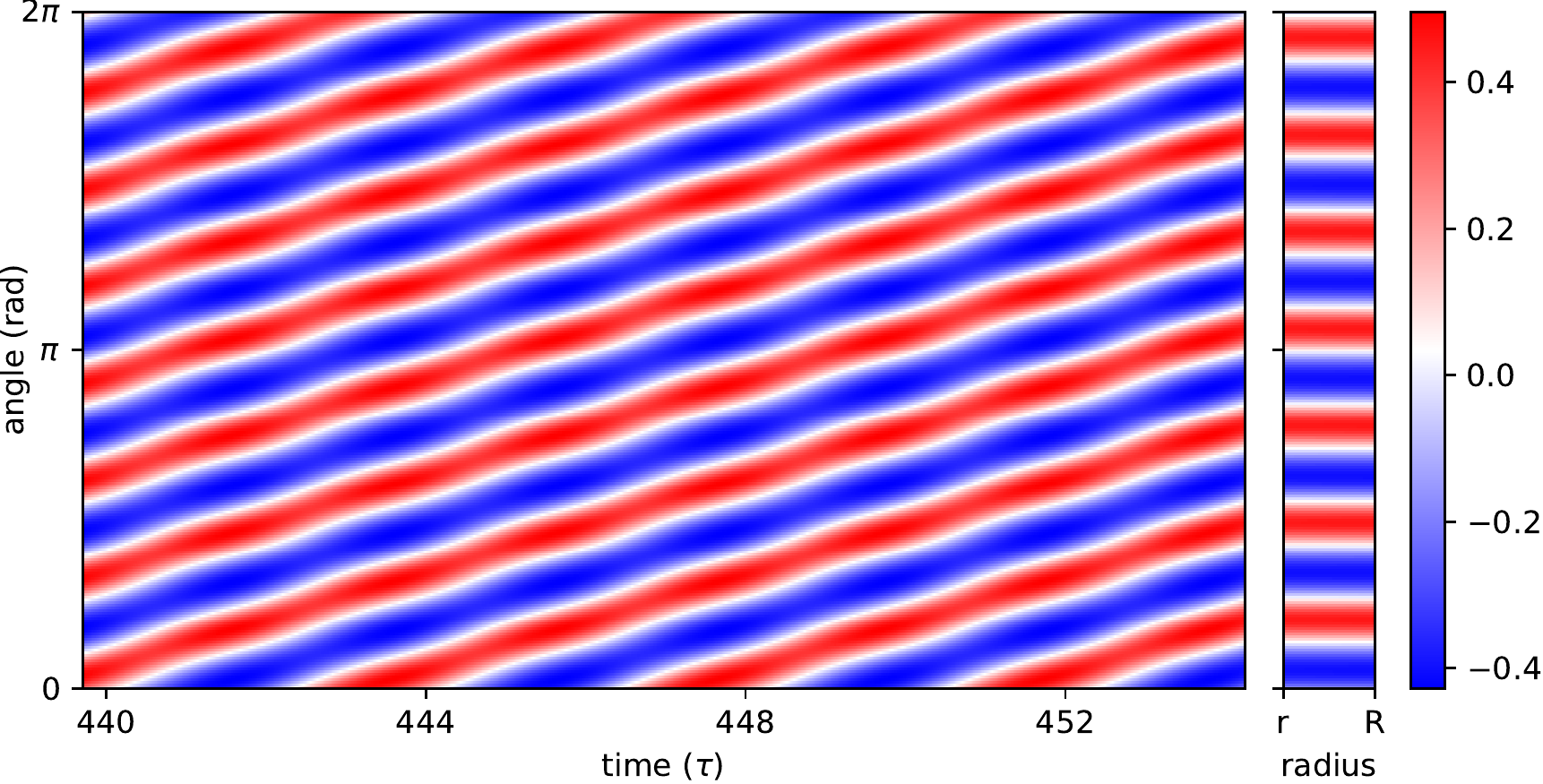}
    \end{subfigure}
    
    \caption{Simulation with rotating parameters (Table \ref{tab:params}), oblique boundary conditions $\alpha = 0$, and mixed initial data (Table \ref{tab:ICparams}). Showing snapshots at $[-T,7T]$ and $[242T, 250T]$.}
    \label{fig:rotating_mix_neumann}
\end{figure}

\begin{figure}[h]
    \begin{subfigure}[b]{0.48\textwidth}
        \centering
        \includegraphics[width=\linewidth]{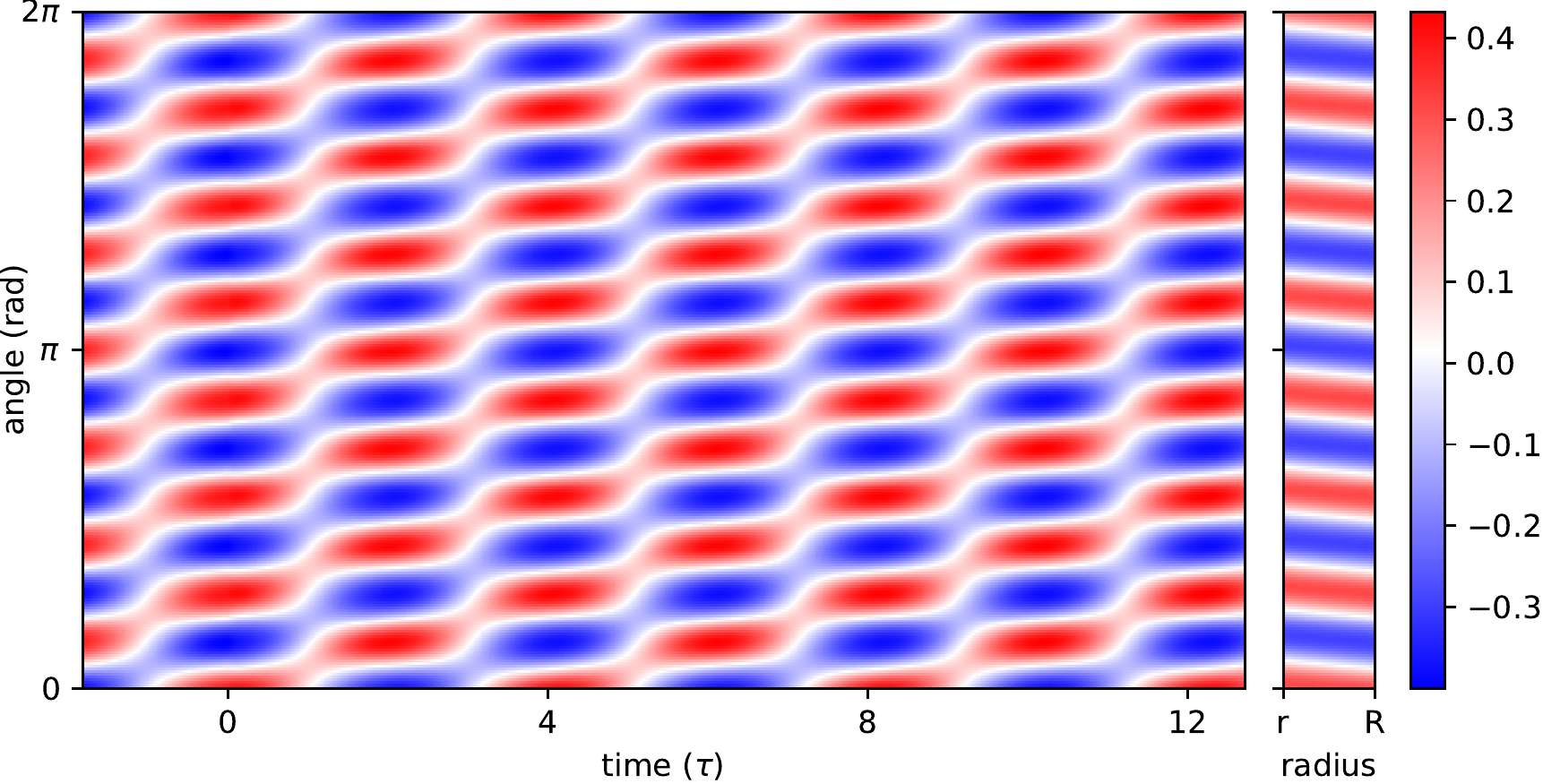}
    \end{subfigure}\hfill%
    \begin{subfigure}[b]{0.48\textwidth}
        \centering
        \includegraphics[width=\linewidth]{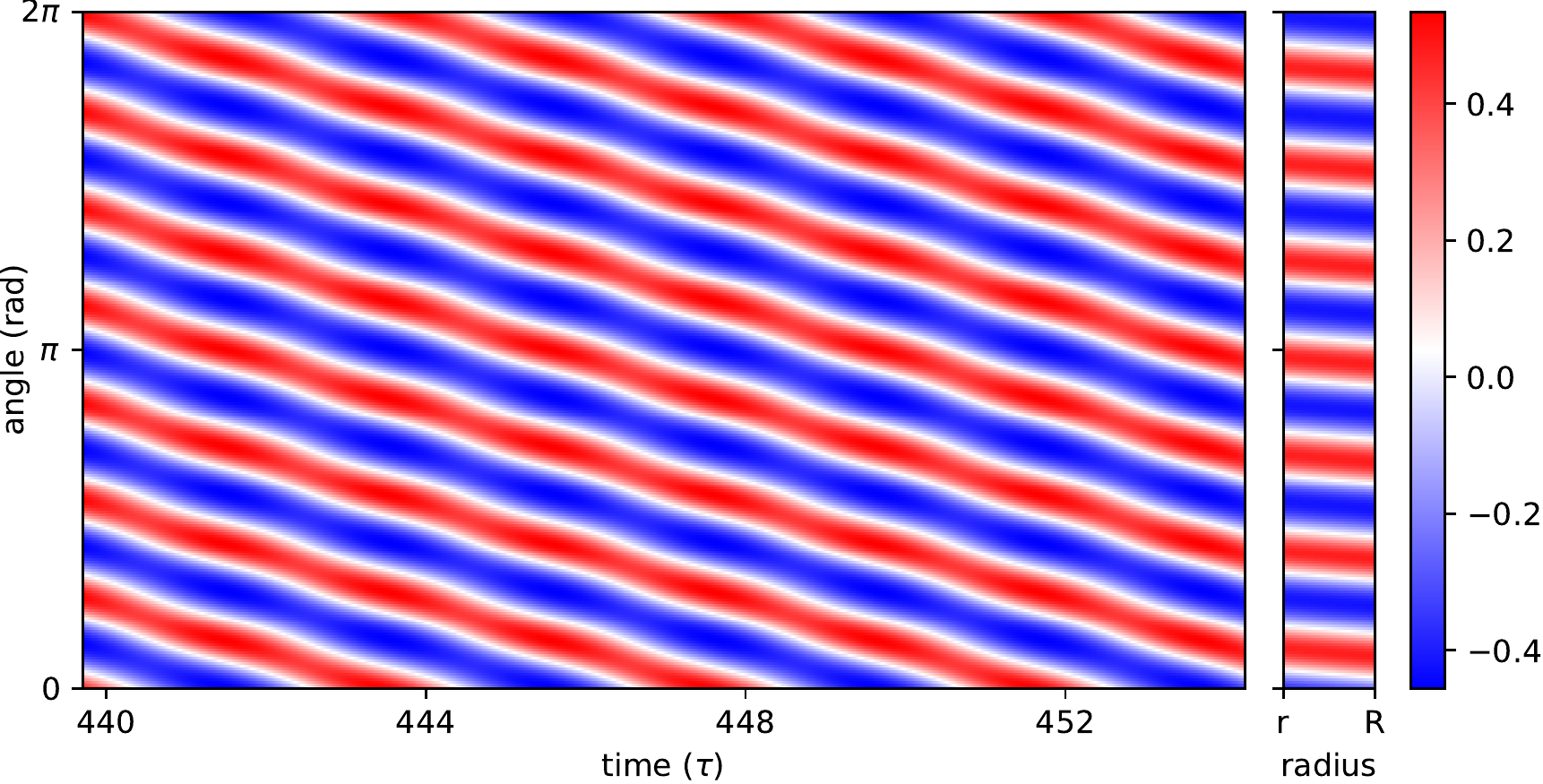}
    \end{subfigure}
    
    \caption{Simulation with rotating parameters (Table \ref{tab:params}), oblique boundary conditions $\alpha = \arctan(\nicefrac{3}{4})$, and mixed initial data (Table \ref{tab:ICparams}). Showing snapshots at $[-T,7T]$ and $[242T, 250T]$.}
    \label{fig:rotating_mix_oblique}
\end{figure}

\subsection{Standing parameters}
Thanks to the reflection symmetry of the one-dimensional limit system, counter-propagating rotating waves can coexist in it in the form of a standing wave, and a suitable choice of parameters (see Table \ref{tab:params}) can make this state orbitally stable. The system then demonstrates cooperative dynamics: in an uneven combination of rotating waves, they work towards balancing each other out. 

The same holds true for the two-dimensional system with Neumann boundary conditions since it retains the $O(2)$ symmetry group. In Figure \ref{fig:standing_mix_neumann}, we start from a combination of waves in the initial data to see them even out as a standing wave. Meanwhile, oblique boundary conditions break the reflection symmetry and so standing wave solutions cease to exist. Cooperative mindset in conjunction with the preference of outward rotation act as checks and balances: they result in pulsating outward rotating spiral (see Figure \ref{fig:standing_mix_oblique}).

\begin{figure}
    \begin{subfigure}[b]{0.48\textwidth}
        \centering
        \includegraphics[width=\linewidth]{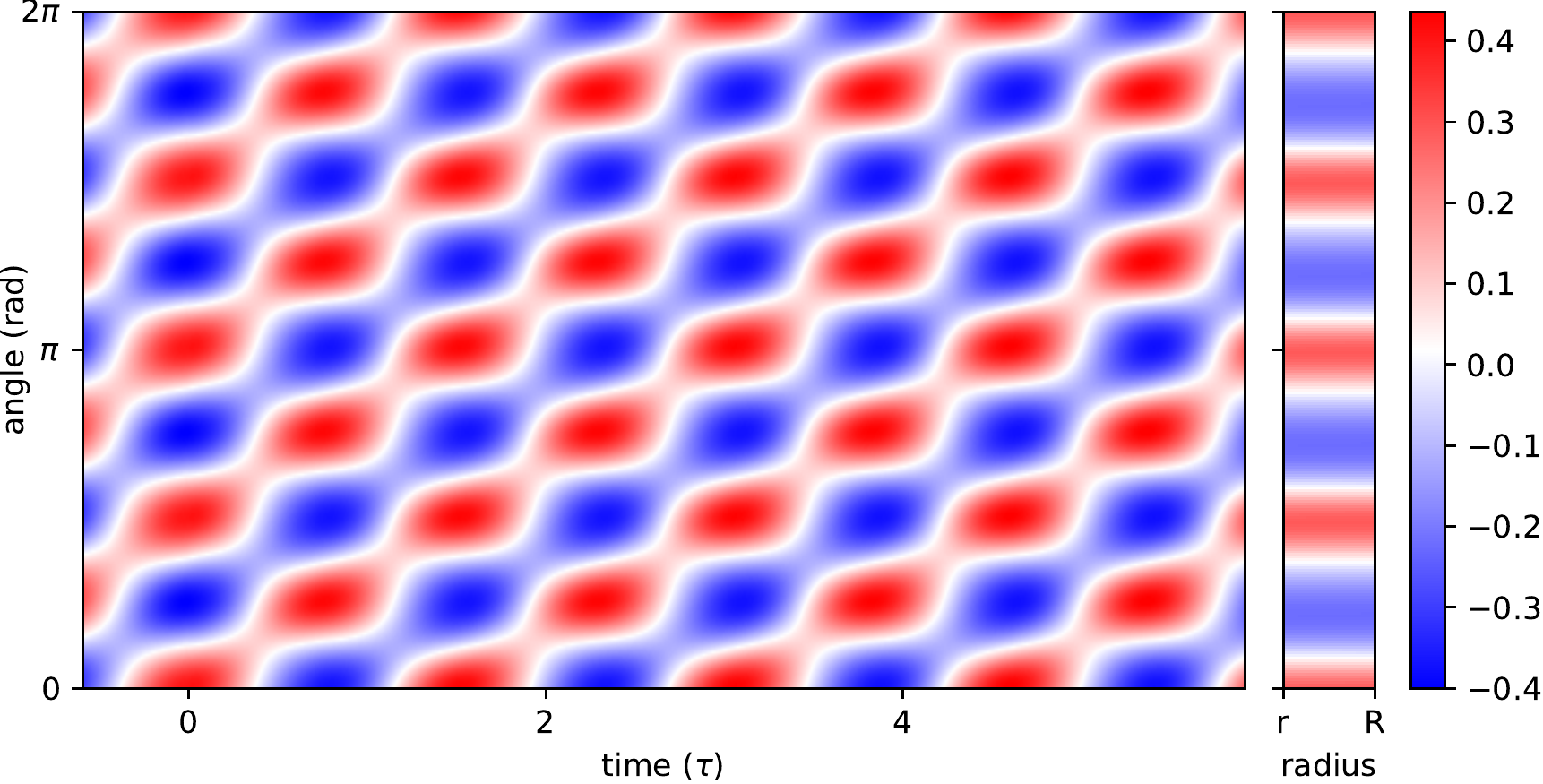}
    \end{subfigure}\hfill%
    \begin{subfigure}[b]{0.48\textwidth}
        \centering
        \includegraphics[width=\linewidth]{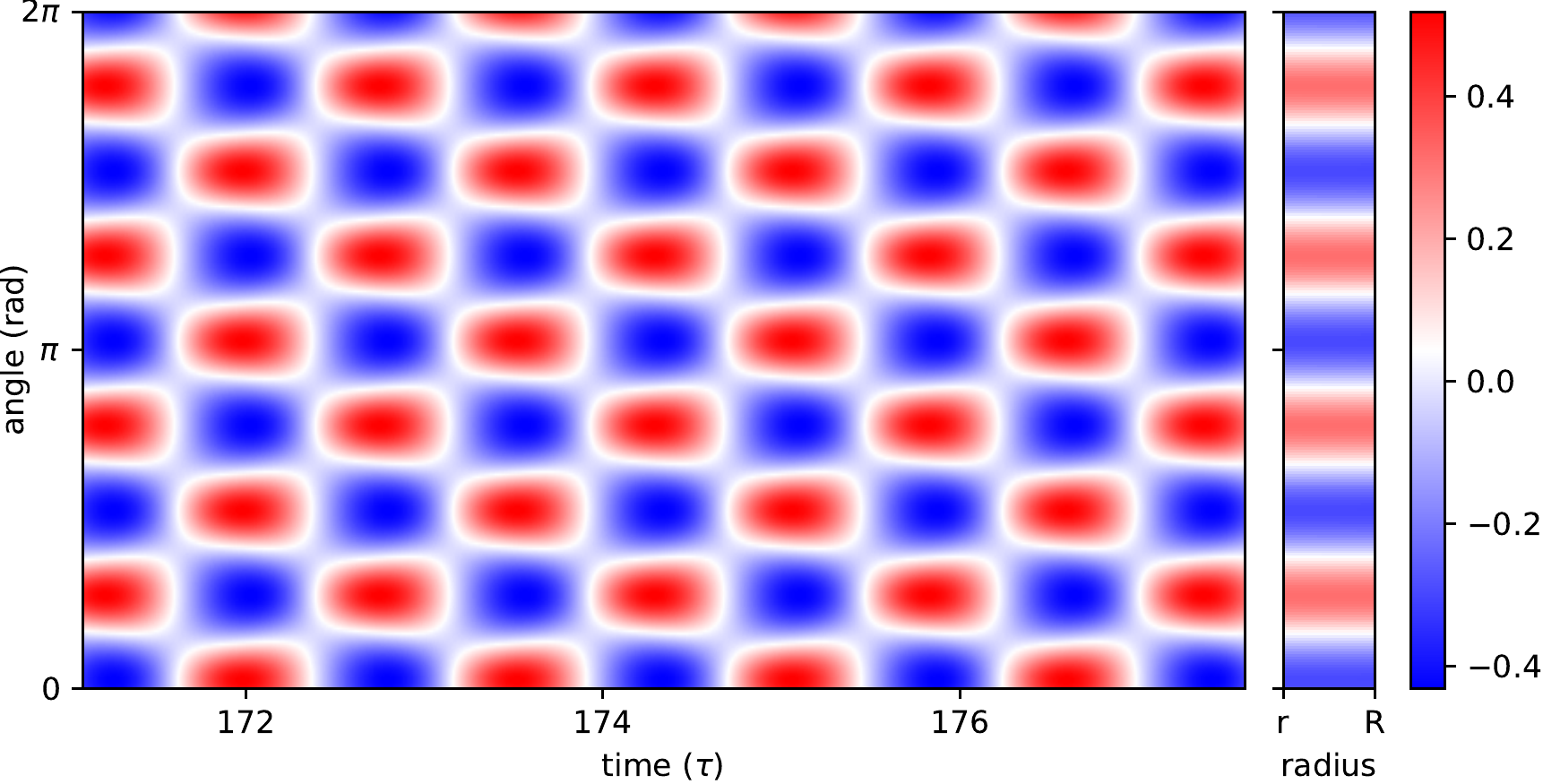}
    \end{subfigure}
    
    \caption{Simulation with standing parameters (Table \ref{tab:params}), Neumann boundary conditions $\alpha = 0$, and mixed initial data (Table \ref{tab:ICparams}). Showing snapshots at $[-T,10T]$ and $[289T, 300T]$.}
    \label{fig:standing_mix_neumann}
\end{figure}

\begin{figure}
    \begin{subfigure}[b]{0.48\textwidth}
        \centering
        \includegraphics[width=\linewidth]{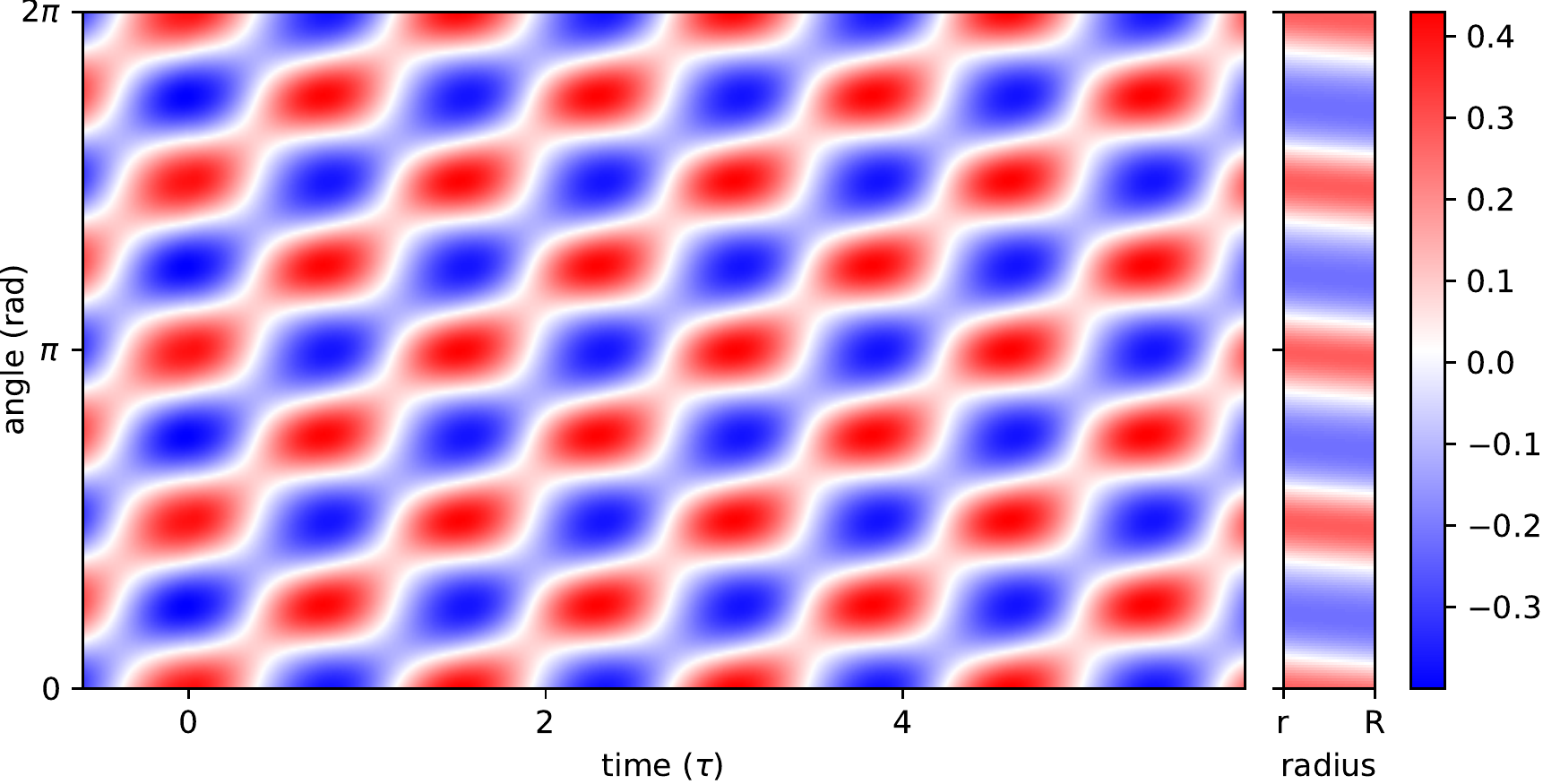}
    \end{subfigure}\hfill%
    \begin{subfigure}[b]{0.48\textwidth}
        \centering
        \includegraphics[width=\linewidth]{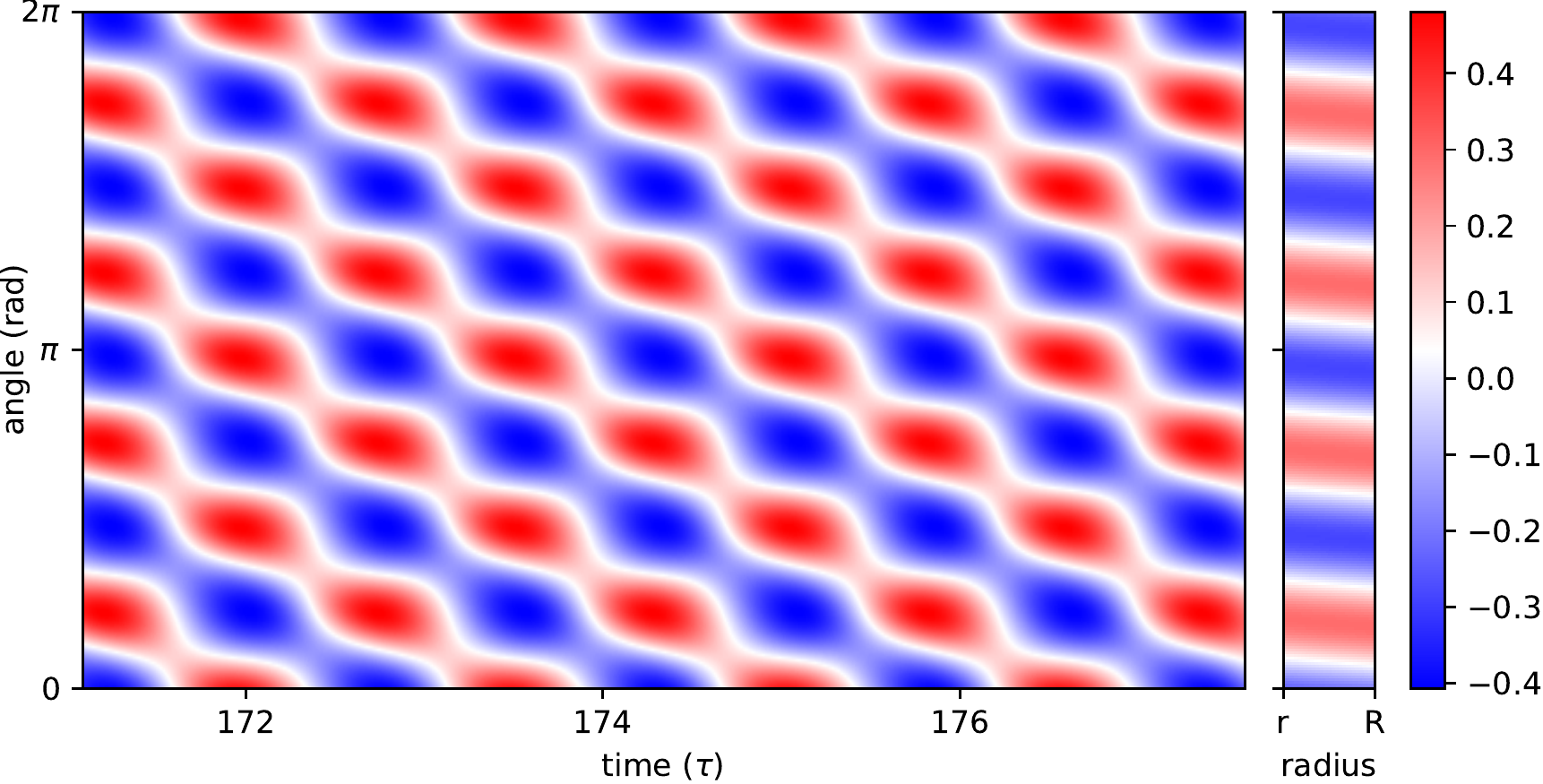}
    \end{subfigure}
    
    \caption{Simulation with standing parameters (Table \ref{tab:params}), oblique boundary conditions $\alpha = \arctan(\nicefrac{3}{4})$, and mixed initial data (Table \ref{tab:ICparams}). Showing snapshots at $[-T,10T]$ and $[289T, 300T]$.}
    \label{fig:standing_mix_oblique}
\end{figure}

By a pulsating spiral we mean an uneven sum of inward and outward rotating spirals. For instance, the initial data we used is also a pulsating wave though inward rotating. The behavior of such waves is rather peculiar: they display in-phase periodic oscillations of amplitude and rotation speed; when the amplitude is high, they rotate very slowly; as the amplitude reaches its minimum, they dash in the same direction and then `come to a halt` to regain their amplitude. 

Pulsating waves were observed in the two-dimensional system with Neumann boundary conditions too but only as an intermediate stage (possibly long lasting) between rotating and standing waves. But in the oblique case \textemdash{} Figure \ref{fig:standing_mix_oblique_amp} shows that the amplitude and form of the pulsating wave do not change up until $t = 3000T$ \textemdash{} they seem to be a system's state on their own. Somewhat similar waves were described in \cite{LandsbergKnoblochNew1993}.

\begin{figure}
    \begin{subfigure}[b]{0.48\textwidth}
        \centering
        \includegraphics[width=\linewidth]{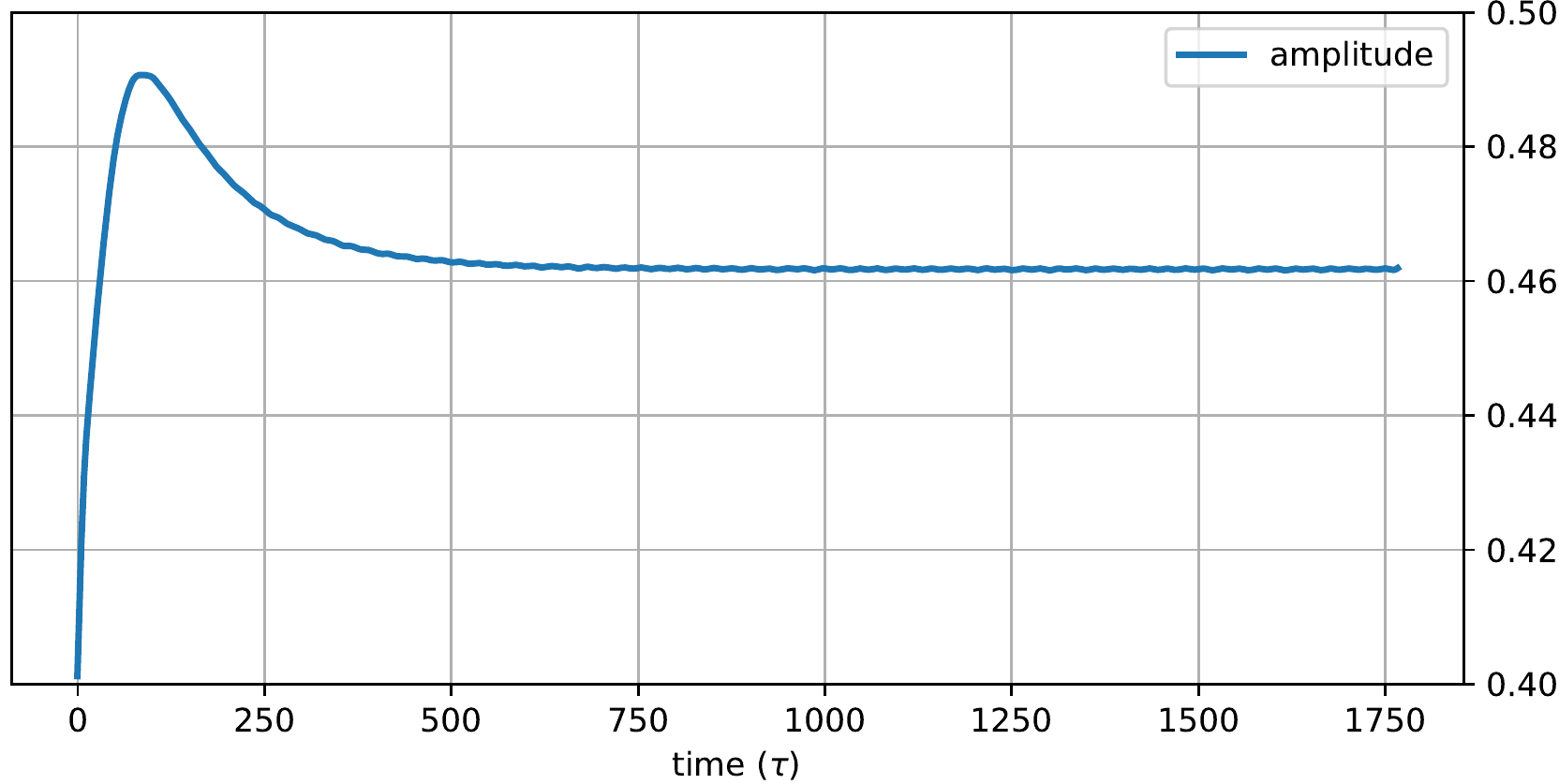}
    \end{subfigure}\hfill%
    \begin{subfigure}[b]{0.48\textwidth}
        \centering
        \includegraphics[width=\linewidth]{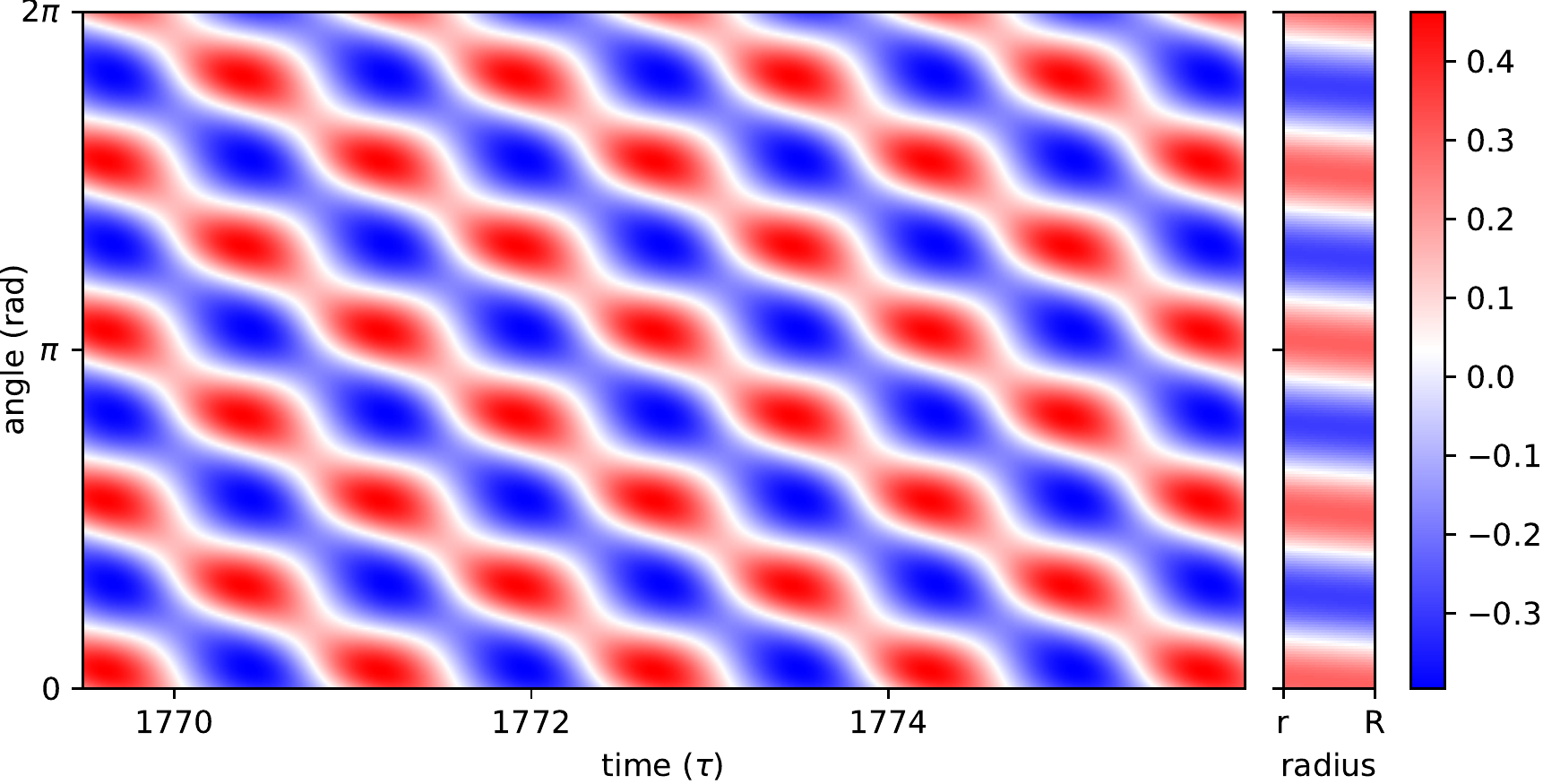}
    \end{subfigure}
    
    \caption{Simulation with standing parameters (Table \ref{tab:params}), oblique boundary conditions $\alpha = \arctan(\nicefrac{3}{4})$, and mixed initial data (Table \ref{tab:ICparams}). Showing the changes of amplitude at $[0, 3000T]$ and a snapshot at $[2989T, 3000T]$.}
    \label{fig:standing_mix_oblique_amp}
\end{figure}

\section{Conclusion}
In the present paper we worked with a delayed scalar diffusion equation of nonlinear optics in a thin annulus with oblique derivative boundary conditions, aiming to predict the existence, shape, and stability properties of spiral waves based on the physical parameters of the model. Our approach consisted in passing to a limiting delayed diffusion equation on a circle, whose rotating and standing waves we can describe by computing\textemdash{}explicitly and in closed form\textemdash{}the coefficients of the Hopf bifurcation normal form. Knowing the relations between `one-dimensional` and `two-dimensional` parameters of the model, we make our predictions about two-dimensional spirals by checking the `one-dimensional` conditions. This allowed us to observe rigidly rotating spirals (corresponding to rotating one-dimensional waves) and pulsating spirals (corresponding to standing one-dimensional waves) in numerical simulations; both types of waves showed some attractivity properties.

This paper can also be seen as a proof of concept that could be applied to general scalar delayed diffusion equations (or systems of reaction-diffusion equations without delay; we just want to have a mechanism leading to a Hopf bifurcation in the limiting problem, be it delay or interactions between several components) in thin domains with oblique derivative boundary conditions. But the corresponding generalisation and rigorous justification are left for the future. 

\paragraph*{Acknowledgements} The authors are grateful to Bernold Fielder for discussions during 11QTDE. The reported study was supported by RFBR according to the research project 18-31-00236

\appendix
\section{Assumptions for the normal form}
\label{appendix:1}
We devote this Appendix to the verification and discussion of the assumptions that are used in \cite{FariaNormal2000} to study the existence of a center manifold and the properties of the normal form on it. 

\subsection{Proof of Theorem \ref{theorem:2}}
Let 
\[
G:\, L^2(\Omega) \longrightarrow L^2(\Omega), \quad \mathcal{D}(G) = \lbrace u \in H^2(\Omega) : \rho u_\rho = \tan\alpha u_\phi\text{ on } \partial\Omega \rbrace, \quad Gu = -\Delta u \,, u \in \mathcal{D}(G),
\]
be the linear operator associated with the oblique derivative boundary value problem in the annulus $\Omega$. 
\begin{lemma}
\label{lemma:A1}
The following is true for the operator $G$:
\begin{enumerate}
    \item has a discrete spectrum $\sigma(G)$;
    \item is closed;
    \item has compact resolvent $R(\lambda, G)$;
    \item all directions in the $\lambda$-plane with the exception of the positive axis are of minimal growth, i.e.
    \[
        \| R(\lambda, G) \|_{\mathcal{L}(L^2(\Omega))} \leq \frac{M_{\arg(\lambda)}}{|\lambda|}, \quad \lambda \notin \mathbb{R_+},
    \]
    for $|\lambda|$ large enough.
\end{enumerate}
\end{lemma}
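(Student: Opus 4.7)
My plan is to treat the four claims as consequences of the Agmon--Douglis--Nirenberg theory for parameter-elliptic boundary value problems, which applies once we check that the oblique derivative system is a regular (Shapiro--Lopatinskii) boundary value problem for the Laplacian admitting an Agmon angle covering $\mathbb{C}\setminus\mathbb{R}_+$. Rewriting $\rho u_\rho = \tan\alpha\, u_\phi$ in terms of the outward unit normal $\partial_n$ and arc-length $\partial_s$ on each component of $\partial\Omega$ turns the condition into $\partial_n u = \pm\tan\alpha\,\partial_s u$, so the coefficient of the normal derivative is $1\neq 0$. Thus $(-\Delta, B)$ is a regular first-order oblique boundary value problem. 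From here, the classical interior plus boundary a priori estimate
\[
\|u\|_{H^2(\Omega)} \leq C\bigl(\|Gu\|_{L^2(\Omega)} + \|u\|_{L^2(\Omega)}\bigr), \qquad u \in \mathcal{D}(G),
\]
shows that the graph norm of $G$ is equivalent to the $H^{2}$ norm. Since $\mathcal{D}(G)$ with the $H^{2}$ norm is complete, $G$ is closed. This handles claim 2.

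For claim 4 I would verify parameter-ellipticity in any closed sector $\Sigma_\delta = \{\lambda : |\arg\lambda - \pi|\leq\pi-\delta\}$ disjoint from $\mathbb{R}_+$. After freezing coefficients at a boundary point and tangential Fourier transform in $s$ with dual variable $\eta$, the problem reduces to the ODE $-u''(y) + (\eta^2 - \lambda)u(y) = 0$ on $y>0$ with boundary relation $u'(0) - \mathrm{i}(\pm\tan\alpha)\eta\, u(0) = 0$. The only bounded solution is $u(y) = c\exp(-\sqrt{\eta^2-\lambda}\,y)$ with the principal branch of the square root. The Shapiro--Lopatinskii-with-parameter condition fails iff $\sqrt{\eta^2 - \lambda} = -\mathrm{i}(\pm\tan\alpha)\eta$ for some $(\eta,\lambda)\neq 0$ with $\lambda\in\Sigma_\delta$; squaring gives $\lambda = (1+\tan^2\alpha)\eta^2 \geq 0$, which is excluded. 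Standard ADN yields the resolvent estimate
\[
|\lambda|\,\|u\|_{L^2} + |\lambda|^{1/2}\|u\|_{H^1} + \|u\|_{H^2} \leq C_\delta\,\|(G-\lambda)u\|_{L^2}
\]
uniformly for $\lambda\in\Sigma_\delta$ with $|\lambda|$ sufficiently large, and the bound on $\|R(\lambda,G)\|_{\mathcal{L}(L^2)}$ is precisely the first term.

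Claims 1 and 3 then fall out quickly. The parameter-elliptic estimate already forces $(G-\lambda)$ to be injective with closed range for $|\lambda|$ large in the sector; combined with the Fredholm alternative (coming from the ADN theory, since the problem has index zero), $\lambda$ lies in the resolvent set. Therefore $\rho(G)\neq\emptyset$, and
\[
R(\lambda_0,G)\colon L^2(\Omega) \longrightarrow \mathcal{D}(G)\subset H^2(\Omega) \hookrightarrow L^2(\Omega)
\]
is compact by the Rellich--Kondrachov embedding, giving claim 3. Compactness of one resolvent implies compactness of all and, by the spectral theory of operators with compact resolvent, the spectrum of $G$ consists entirely of isolated eigenvalues of finite algebraic multiplicity, giving claim 1.

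The main obstacle is the Shapiro--Lopatinskii-with-parameter verification in paragraph two: one must confirm that the oblique symbol does not conspire with the square-root branch to produce an admissible bounded half-line solution anywhere off $\mathbb{R}_+$, and that the constants $C_\delta$ (equivalently $M_{\arg\lambda}$) can be taken uniform on every closed subsector away from the positive axis. The computation above is clean because the coefficients of the boundary operator are constant, but one should also note that $C_\delta \to \infty$ as $\delta\to 0$, which is exactly what is expected since the positive real axis carries the spectrum and cannot be a direction of minimal growth.
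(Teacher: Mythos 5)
Your argument is correct and is essentially the paper's own proof unpacked: the paper disposes of all four claims in one line by citing Agmon's Theorem 4.4 on elliptic boundary value problems with rays of minimal growth, asserting that $G$ is associated with an ``absolutely elliptic regular boundary value problem.'' What you have done is verify the hypotheses of that theorem explicitly---regularity of the oblique condition (nonvanishing normal coefficient) and the parameter-ellipticity/complementing condition off $\mathbb{R}_+$ via the half-space ODE computation---so the route is the same, just with the black box opened.
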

\begin{proof}
Follows from \cite[Theorem 4.4]{Agmoneigenfunctions1962} as $G$ is associated with an absolutely elliptic regular boundary value problem.
\end{proof}

\begin{lemma}
\label{lemma:A2}
The spectrum $\sigma(G)$ of G lies inside a sector $\lbrace \lambda \in \mathbb{C} : |\arg(\lambda)| \leq |\tan\alpha| \rbrace$.
\end{lemma}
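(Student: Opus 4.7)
The plan is to extract the sector estimate directly from the eigenvalue equation by integration by parts. Lemma \ref{lemma:A1} guarantees that $\sigma(G)$ is discrete, so any $\lambda \in \sigma(G)$ admits a nontrivial eigenfunction $u \in \mathcal{D}(G)$ with $-\Delta u = \lambda u$. I would test this equation against $\bar u$ in $L^2(\Omega)$ and invoke Green's identity in polar coordinates; the boundary contribution comes from the two circles $\rho=r$ and $\rho=R$ with arc length elements $r\,d\phi$ and $R\,d\phi$ and opposite outward normals. The oblique derivative boundary condition $\rho u_\rho = \tan\alpha\, u_\phi$ then trades the normal derivative for a tangential one, collapsing the boundary integral to $\tan\alpha\,(J(R)-J(r))$, where $J(\rho) = \int_0^{2\pi} \bar u(\rho,\phi)\, u_\phi(\rho,\phi)\, d\phi$.

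A one-line integration by parts in $\phi$ using $2\pi$-periodicity gives $J(\rho) = -\overline{J(\rho)}$, so $J$ is purely imaginary. Writing $J = i q$ with $q$ real and separating real and imaginary parts yields
\[
\mathrm{Re}(\lambda)\, \|u\|_{L^2(\Omega)}^{2} = \|\nabla u\|_{L^2(\Omega)}^{2}, \qquad \mathrm{Im}(\lambda)\, \|u\|_{L^2(\Omega)}^{2} = -\tan\alpha\,\bigl(q(R) - q(r)\bigr).
\]
The first identity already forces $\mathrm{Re}\lambda \geq 0$. It then suffices to establish the sharp bound $|q(R) - q(r)| \leq \|\nabla u\|_{L^2(\Omega)}^{2}$, since this gives $|\mathrm{Im}\lambda| \leq |\tan\alpha|\,\mathrm{Re}\lambda$ and places $\lambda$ in the claimed sector.

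For the last inequality I would expand $u = \sum_n c_n(\rho) e^{\mathrm{i} n\phi}$, so that $q(\rho) = 2\pi \sum_n n\, |c_n(\rho)|^2$, differentiate termwise, and apply Cauchy--Schwarz in the index $n$ to get the pointwise bound
\[
|q'(\rho)| \leq 2 \Bigl( \int_0^{2\pi} |u_\phi(\rho,\phi)|^2\, d\phi \Bigr)^{1/2} \Bigl( \int_0^{2\pi} |u_\rho(\rho,\phi)|^2\, d\phi \Bigr)^{1/2}.
\]
The decisive move is then the weighted AM--GM inequality $2ab \leq a^{2}/\rho + \rho\, b^{2}$; integrating the resulting pointwise bound over $\rho \in [r,R]$ reassembles exactly the polar expression $\int_\Omega \bigl(|u_\phi|^2/\rho^2 + |u_\rho|^2\bigr)\,\rho\, d\rho\, d\phi = \|\nabla u\|_{L^2(\Omega)}^{2}$, closing the estimate.

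The main obstacle is the selection of the weight in this AM--GM step. The naive $2ab \leq a^2 + b^2$ would leave unweighted $\phi$-integrals that do not assemble into the polar $H^1$ seminorm, since the area element carries a factor $\rho$ while $|\nabla u|^2$ in polar coordinates carries $1/\rho^2$ in front of $|u_\phi|^2$. Only the weight $\rho$ aligns both sides and yields the constant $|\tan\alpha|$ without loss; with that choice in hand, the proof is essentially a computation.
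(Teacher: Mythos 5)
Your proof is correct, and it reaches the same final inequality $|\mathrm{Im}\,\lambda| \leq |\tan\alpha|\,\mathrm{Re}\,\lambda$ as the paper, but by a different route. The paper avoids boundary terms altogether: it rewrites $-\Delta$ as the divergence-form operator $-\mathrm{div}(\nabla u\, Q_\alpha^T)$ with $Q_\alpha = \bigl[\begin{smallmatrix}1 & -\tan\alpha\\ \tan\alpha & 1\end{smallmatrix}\bigr]$, observes that the oblique condition is exactly the vanishing of the associated conormal derivative, and so gets $\lambda = \int_\Omega \nabla u\, Q_\alpha^T \nabla u^*$ in one application of the divergence theorem; the sector then falls out of the pointwise bound $2|\mathrm{Im}(u_x u_y^*)| \leq |\nabla u|^2$. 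You instead keep the Laplacian as is, carry the boundary integrals $\tan\alpha\,(J(R)-J(r))$ explicitly, and must then convert the boundary difference back into an interior quantity via the fundamental theorem of calculus in $\rho$ before applying the weighted AM--GM. In fact the two arguments coincide at the core: your $q(R)-q(r) = 2\,\mathrm{Im}\int_\Omega \bar u_\rho\,(u_\phi/\rho)\,\rho\,d\rho\,d\phi$ is (up to sign and rotation invariance of $\mathrm{Im}(\bar a b)$) precisely the paper's $2\int_\Omega \mathrm{Im}(u_x u_y^*)$, and your weighted inequality $2ab \leq a^2/\rho + \rho b^2$ is the polar-coordinate incarnation of the same pointwise AM--GM. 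Your version is more elementary and self-contained but requires the extra care you correctly identify (the weight $\rho$, without which the constant degrades to $|\tan\alpha|\max(R,1/r)$, and the justification that $q$ is absolutely continuous in $\rho$ for $u \in H^2$ so that termwise differentiation and the FTC step are legitimate); the paper's version is shorter and more structural, as it exposes the oblique problem as a non-self-adjoint divergence-form problem, which is what makes the sector geometrically transparent.
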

\begin{proof}
The proof basically repeats the reasoning in \cite{KostinSherstyukovcomplex2017}. Consider an eigenvalue problem
\[
    -\Delta u = \lambda u
\]
and a matrix
\[
    Q_\alpha = 
    \begin{bmatrix}
    1 & -\tan\alpha \\
    \tan\alpha & 1
    \end{bmatrix}.
\]
It is easily observed that 
\[
    -\mathrm{div}(\nabla u Q_\alpha^T) = -\Delta u = \lambda u
\]
and that the boundary condition can be written as
\[
    \frac{\partial u}{\partial (Q_\alpha^T\overrightarrow{\nu})} = 0,
\]
where $\overrightarrow{\nu}$ is the outward normal. The divergence theorem then gives 
\[
    \lambda = \int_\Omega \nabla u Q_\alpha^T \nabla u^*
\]
for a normalized eigenfunction $\| u \|_{L^2(\Omega)} = 1$. Then
\[
    |\mathrm{Im}\lambda| = 2 |\tan\alpha| \left| \int_\Omega \mathrm{Im}(u_x u_y^*)  \right| \leq |\tan\alpha| \| \nabla u \|^2_{L^2(\Omega)} = |\tan\alpha| \mathrm{Re}\lambda.
\]
\end{proof}

\begin{lemma}
\label{lemma:A3}
The operator $-G$ is sectorial and generates an analytic semigroup.
\end{lemma}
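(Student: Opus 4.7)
The plan is to combine Lemmas~\ref{lemma:A1} and \ref{lemma:A2}. Lemma~\ref{lemma:A2} confines $\sigma(G)$ to a narrow sector about the positive real axis: the inequality $|\mathrm{Im}\,\lambda| \leq |\tan\alpha|\,\mathrm{Re}\,\lambda$ forces $\mathrm{Re}\,\lambda \geq 0$ and $|\arg\lambda| \leq |\alpha| < \pi/2$. Consequently, for any fixed $\theta \in (\pi/2,\, \pi - |\alpha|)$, the resolvent set of $G$ contains the open sector $\Sigma_\theta = \{\lambda \neq 0 : |\arg\lambda - \pi| < \theta\}$ about the negative real axis, and Lemma~\ref{lemma:A1}(4) provides the ray-wise bound $\|R(\lambda, G)\| \leq M_{\arg\lambda}/|\lambda|$ on every ray in $\Sigma_\theta$, for $|\lambda|$ large.

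Next I would upgrade these ray-by-ray estimates to a uniform sectorial estimate on $\Sigma_\theta$. Given $\lambda_0 \in \Sigma_\theta$, the Neumann series for the resolvent shows that $R(\lambda, G)$ exists and satisfies $\|R(\lambda, G)\| \leq 2 M_{\arg\lambda_0}/|\lambda_0|$ on the disc $|\lambda - \lambda_0| \leq |\lambda_0|/(2 M_{\arg\lambda_0})$. Inspection of the proof of \cite[Theorem~4.4]{Agmoneigenfunctions1962} shows that $\phi \mapsto M_\phi$ is bounded on compact arcs of angles disjoint from the positive real axis; covering $\Sigma_\theta \cap \{|\lambda| \geq R_0\}$ by finitely many such discs placed along rays of uniformly spaced arguments yields a uniform bound $\|R(\lambda, G)\| \leq C/|\lambda|$ on that region. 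To match the standard Pazy--Henry definition of sectoriality on a full shifted sector, I would shift the vertex: since $\sigma(G)$ is discrete (Lemma~\ref{lemma:A1}(1)), only finitely many eigenvalues lie in $\overline{B_{R_0}(0)}$, so choosing $\omega < 0$ with $|\omega|$ large enough places the translated sector $\{\lambda \neq \omega : |\arg(\lambda - \omega) - \pi| < \theta\}$ in $\rho(G)$ with $\|R(\lambda, G)\| \leq C'/|\lambda - \omega|$. Reflecting across the origin, this is the sectoriality of $-G$ with opening angle $\theta > \pi/2$, so $-G$ generates an analytic semigroup by the standard characterization.

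The main obstacle is the passage from ray-wise minimal growth to a uniform sectorial bound, which hinges on $\phi \mapsto M_\phi$ being bounded on closed arcs. One can either read this off Agmon's construction directly, or bypass the issue by a Phragm\'en--Lindel\"of argument applied to the holomorphic operator-valued function $\lambda \mapsto \lambda R(\lambda, G)$ on each closed subsector of $\Sigma_\theta$, using the ray-wise bounds as boundary data. Everything else --- closedness of $G$, discreteness of the spectrum, existence of the resolvent on each ray --- is already furnished by Lemmas~\ref{lemma:A1} and \ref{lemma:A2}.
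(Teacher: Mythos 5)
Your proposal is correct and follows essentially the same route as the paper: combine the closedness and ray-wise minimal-growth resolvent estimates of Lemma \ref{lemma:A1} with the spectral confinement of Lemma \ref{lemma:A2}, then uniformize the ray-dependent constants --- and your fallback device, the Phragm\'en--Lindel\"of principle applied to $\lambda R(\lambda, G)$ on the complement of the spectral sector, is exactly the tool the paper invokes for that step. Your extra care in shifting the vertex past the finitely many eigenvalues near the origin is a detail the paper's terse conclusion glosses over, but it does not alter the argument.
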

\begin{proof}
It follows from Lemma \ref{lemma:A1} that $G$ is closed and that resolvent estimates hold for all rays (except for the positive axis), though with different constants. Lemma \ref{lemma:A2} asserts that the spectrum $\sigma(G)$ resides in a sector; hence the resolvent $R(\lambda, G)$ is analytic in the sector's complement and the Phragm\'en–Lindel\"of principle gives an estimate with a uniform constant
\[
    \| R(\lambda, G) \|_{\mathcal{L}(L^2(\Omega))} \leq \frac{M}{|\lambda|}
\]
for all $|\lambda| > 0$ and $|\arg(\lambda)| > |\alpha|$. Thus $-G$ is sectorial and generates an analytic semigroup \cite{EngelNagelshort2006}.
\end{proof}

\begin{proof}[Proof of Theorem \ref{theorem:2}]
Lemma \ref{lemma:A1} shows that $G$ has compact resolvent and Lemma \ref{lemma:A3} says that it generates an analytic semigroup. The proposition follows from Theorem 5.12 of \cite{EngelNagelshort2006}: $G$ generates an immediately compact $C_0$-semigroup.
\end{proof}

\subsection{Linear operator acting on the delayed function}
The reasoning in \cite{FariaNormal2000} is based on the assumption that in the linearized functional differential equation, the delayed part is represented by a bounded linear operator. For us, this is the propagator $\mathcal{B}_{z_0}$ of the oblique derivative boundary value problem for the linear Schr\"odinger equation \eqref{eq:2d_schrodinger},\eqref{eq:bc_oblique}:
\[
    \frac{\partial A}{\partial z} + \mathrm{i} \Delta A = 0, \quad \rho \frac{\partial A}{\partial \rho} = \tan\alpha \frac{\partial A}{\partial \phi} \text{ on } \partial \Omega, \quad \mathcal{B}_{z_0} A_0 = A(z_0; A_0).
\]

Multiplying the equation by $A^*$, integrating over $\Omega$, and taking the real part, we get the "conservation" law,
\[
\frac{1}{2} \frac{d}{dz} \| A \|^2_{L^2(\Omega)} = \tan\alpha \left[ |\gamma_R A|^2_{H^{\nicefrac{1}{2}}(S^1)} - |\gamma_r A|^2_{H^{\nicefrac{1}{2}}(S^1)} \right],
\]
where $\gamma_R$ and $\gamma_r$ are trace operators for the outer and inner circles, and $|\cdot|_{H^{\nicefrac{1}{2}}(S^1)}$ denotes a fractional Sobolev seminorm for functions on a circle:
\[
    v = \sum_{n \in \mathbb{Z}} v_n \frac{e^{\mathrm{i} n \phi}}{\sqrt{2\pi}}, \quad \left| v \right|_{H^{\nicefrac{1}{2}}(S^1)}^2 = \quad  \sum_{n \in \mathbb{Z}} n |v_n|^2.
\]
The equation for the norm of the gradient can be derived similarly
\[
    \frac{1}{2}\frac{d}{dz} \| \nabla A \|^2_{L^2(\Omega)} = \tan\alpha(1 + (\tan\alpha)^2) \left[ |\gamma_R A|^2_{H^{\nicefrac{3}{2}}(S^1)} - |\gamma_r A|^2_{H^{\nicefrac{3}{2}}(S^1)} \right].
\]
If the oblique angle were equal to zero, we would have the standard conservation laws for the linear Schr\"odinger equation,
\[
    \| A(z) \|_{L^2(\Omega)} = \mathrm{const}, \quad \| \nabla A(z) \|_{L^2(\Omega)} = \mathrm{const},
\]
the associated semigroup would be unitary, and the propagator would be bounded. To treat the oblique case, we start with Laplacian eigenfunctions,
\[
    \psi_{n,k} = \frac{y_{n,k}(\rho)}{I_{n,k}} \frac{e^{\mathrm{i}n \phi}}{\sqrt{2\pi}}, \quad \| \psi_{n,k} \|_{L^2(\Omega)} = 1,
\]
where
\[
    y_{n,k}(\rho) = (\zeta_{n,k}r Y_n'(\zeta_{n,k}r) - \mathrm{i}n\tan\alpha Y_n(\zeta_{n,k}))J_n(\zeta_{n,k}\rho) - (\zeta_{n,k}r J_n'(\zeta_{n,k}r) - \mathrm{i}n\tan\alpha J_n(\zeta_{n,k}))Y_n(\zeta_{n,k}\rho)
\]
and $\zeta_{n,k}$ are from Subsection \ref{subsection:laplacian}. The linear Schr\"odinger equation linearly scales eigenfunctions, so
\[
    \frac{1}{2} \frac{d}{dz} \| \mathcal{B}_z \psi_{n,k} \|^2_{L^2(\Omega)} = \tan\alpha \| \mathcal{B}_z \psi_{n,k} \|^2_{L^2(\Omega)} \left[ |\gamma_R \psi_{n,k}|^2_{H^{\nicefrac{1}{2}}(S^1)} - |\gamma_r \psi_{n,k}|^2_{H^{\nicefrac{1}{2}}(S^1)} \right].
\]
With the new notation
\[
    \delta \psi_{n,k} = |\gamma_R \psi_{n,k}|^2_{H^{\nicefrac{1}{2}}(S^1)} - |\gamma_r \psi_{n,k}|^2_{H^{\nicefrac{1}{2}}(S^1)},
\]
we can conclude that
\[
    \| \mathcal{B}_z \psi_{n,k} \|_{L^2(\Omega)} = e^{\tan\alpha \cdot \delta\psi_{n,k}\cdot z}
\]
and so whether the norm grows or diminishes depends on the sign of $\tan\alpha \cdot \delta\psi_{n,k}$.

Let us zoom in on $\delta\psi_{n,k}$. On rewriting it as
\[
  \delta\psi_{n,k} = \frac{n}{I^2_{n,k}}\left[ |y_{n,k}(R)|^2 -|y_{n,k}(r)|^2  \right],
\]
we can guess that just as $\lbrace \zeta_{n,k} \rbrace$ are separated into two qualitatively different groups with $k = 0$ and $k \geq 1$, so could be $\lbrace \delta\psi_{n,k} \rbrace$. Indeed, numerically computed eigenfunctions show that $\delta\psi_{n,k}$ are positive for $k = 0$ and negative for $k \geq 1$ (see Table \ref{tab:psi_nk}).

\begin{table}[h]\centering
\renewcommand{\arraystretch}{1.2}
\begin{tabular}{@{}lllcll@{}}\toprule
    \textbf{Group} & \textbf{Sign} & \textbf{As} $n \to \infty$ & \textbf{As} $k \to \infty$ & \textbf{As} $\varepsilon \to 0$ & \textbf{Wrt} $\alpha$ \\
    
    \midrule
    
    $k = 0$ & $\delta\psi_{n,0} > 0$ & $\delta\psi_{n,0} = \mathcal{O}(n^3)$ & \textemdash{} & $\delta\psi_{n,0} = \mathcal{O}(\varepsilon^2)$ & Even\\
    
    $k \geq1$ & $\delta\psi_{n,k} < 0$ & $\delta\psi_{n,k} = \mathcal{O}(n)$ & $\delta\psi_{n,k} \approx \mathrm{const}$ & $\delta\psi_{n,k} \to -2n$ & Even\\
    
    \bottomrule\\
\end{tabular}
\caption{Properties of $\delta\psi_{n,k}$.}
\label{tab:psi_nk}
\end{table}

As a consequence, depending on the sign of the oblique angle, there are two cases:
\begin{table}[h]\centering
\renewcommand{\arraystretch}{1.2}
\begin{tabular}{@{}lll@{}}\toprule
    \textbf{Angle} & $k = 0$, $z \to +\infty$ & $k \geq 1$, $z \to +\infty$ \\
    
    \midrule
    
    $\tan\alpha > 0$ & $\| \mathcal{B}_z \psi_{n,0} \|_{L^2(\Omega)} \nearrow \infty$ & $\| \mathcal{B}_z \psi_{n,k} \|_{L^2(\Omega)} \searrow 0$ \\
    
    $\tan\alpha < 0$ & $\| \mathcal{B}_z \psi_{n,0} \|_{L^2(\Omega)} \searrow 0$ & $\| \mathcal{B}_z \psi_{n,k} \|_{L^2(\Omega)} \nearrow \infty$ \\
    
    \bottomrule\\
\end{tabular}
\end{table}

In terms of the delayed nonlinear optical system (assume $\tan\alpha > 0$), it shuns memories of radially oscillating states, while those of radially "constant" states are sustained and strengthened. Changing the sign of $\tan\alpha$ inverts the situation.

To extend our understanding from $\lbrace \psi_{n,k} \rbrace$ to other functions, we need some additional properties of them as a system of functions. The general theory says that the generalized eigenfunctions of the oblique derivative Laplacian are complete in $L^2(\Omega)$ \cite{Agmoneigenfunctions1962}. However, we have not proved that there are no other eigenfunctions than $\lbrace \psi_{n,k} \rbrace$, nor have we shown the absence of generalized eigenfunctions. The discussion being informal, we attempt to draw analogies from an oblique derivative boundary value problem in a disc \cite{IlinMoiseevabsence1994,KostinSherstyukovBasis2018} and postulate the following:
\begin{itemize}
    \item $\lbrace \psi_{n,k} \rbrace$ exhaust all the Laplacian eigenfunctions;
    \item there are no generalized eigenfunctions;
    \item for each $n \in \mathbb{Z}$ the subsystem $\lbrace \psi_{n,k} \rbrace_{k \geq 0}$ is a Riesz basis in
    \[
    L^2_n(\Omega)=\lbrace u \in L^2(\Omega): u = e^{\mathrm{i}n \phi} y(\rho),\,y\in L^2(r,R; \rho d\rho) \rbrace;
    \]
    \item $\lbrace \psi_{n,k} \rbrace$ constitute a basis with brackets in $L^2(\Omega)$.
\end{itemize}
The last two points mean that every $u \in L^2(\Omega)$ can be uniquely represented as a sum of mutually orthogonal functions,
\[
    u = \sum_{n \in \mathbb{Z}} u_n, \quad u_n \in L^2_n(\Omega),
\]
and that each $u_n$ in its turn can be uniquely decomposed as
\[
    u_n = \sum_{k \geq 0} u_{n,k} \psi_{n,k} \in L^2_n(\Omega).
\]
Moreover, the following frame conditions hold
\[
    c_n \sum_{k \geq 0}|a_{k}|^2 \leq \| \sum_{k \geq 0} a_{k} \psi_{n,k} \|^2_{L^2(\Omega)} \leq C_n \sum_{k \geq 0}|a_{k}|^2, \quad \lbrace a_k \rbrace \in l^2, \quad 0 < c_n \leq C_n < \infty.
\]
We can thus take a $u \in L^2(\Omega)$ and formally apply the Schr\"odinger propagator $\mathcal{B}_z$ to its decomposition, leading to
\begin{align*}
    \| \mathcal{B}_z u \|^2_{L^2(\Omega)} &= \sum_{n \in \mathbb{Z}} \| \mathcal{B}_z u_n \|^2_{L^2(\Omega)} = \sum_{n \in \mathbb{Z}} \| \sum_{k \geq 0} u_{n,k} \mathcal{B}_z \psi_{n,k} \|^2_{L^2(\Omega)} \\
    & \leq \sum_{n \in \mathbb{Z}} C_n \sum_{k \geq 0} |u_{n,k}|^2 e^{2\tan\alpha \cdot \delta\psi_{n,k}\cdot z} \\
    & \approx \sum_{n \in \mathbb{Z}} C_n \left[ |u_{n,0}|^2 e^{2\tan\alpha M_1^2 n^3 z} + \sum_{k \in \mathbb{N}} |u_{n,k}|^2 e^{-2\tan\alpha M_2^2 n z} \right]
\end{align*}
This suggests that for $\mathcal{B}_z u$ to belong to $L^2(\Omega)$, $u \in L^2(\Omega)$ needs to be very smooth, and that special weighted spaces could be required to make analysis rigorous.

\section{Numerical methods}
\label{appendix:2}
In this Appendix we shall describe the numerical methods that we used to compute the data presented in the figures. 

\subsection{Helmholtz equation}
The first building block is the numerical solution of
\[
    (-\Delta + c)u = f
\]
in an annulus subject to oblique derivative boundary conditions. If the boundary conditions were Neumann, we could easily solve this using the Fourier method, which relies on the fact that Neumann-Laplacian eigenfunctions form an orthogonal basis in $L^2$. For oblique boundary conditions, the situation is different: it is known that the bi-orthogonal system of eigenfunctions is complete for Laplacian in a disc but is not a basis \cite{IlinMoiseevabsence1994}. However, a more refined property holds as this system forms a basis with brackets \cite{KostinSherstyukovBasis2018}, which is enough to justify the Fourier method.

We can then take a uniform polar grid with $N_\rho$ and $N_\theta$ radial and angular knots, apply Fast Fourier Transform to $f$ on each angular slice, and solve a collection of tridiagonal systems of linear equations
\[
    (A_k + cI)u_k = f_k, \qquad k = -\frac{N_\theta}{2},\ldots,\frac{N_\theta}{2}-1,
\]
followed by the inverse Fast Fourier Transform. Matrices $A_k = A + P_k \in \mathbb{C}^{N_\rho \times N_\rho}$ are discretizations of the Laplacian on the corresponding harmonic, where
\[
A = \begin{bmatrix}
    \frac{2}{\delta\rho^2} & -\frac{2}{\delta\rho^2} & 0 & \dots & 0 & 0 & 0 \\
    -\frac{1}{\delta\rho^2} + \frac{1}{2\delta\rho(r + \delta\rho)} & \frac{2}{\delta\rho^2} & -\frac{1}{\delta\rho^2} - \frac{1}{2\delta\rho(r + \delta\rho)} & \dots & 0 & 0 & 0 \\
    \hdotsfor{7} \\
    0 & 0 & 0 & \dots & -\frac{1}{\delta\rho^2} + \frac{1}{2\delta\rho(R - \delta\rho)} & \frac{2}{\delta\rho^2} & -\frac{1}{\delta\rho^2} - \frac{1}{2\delta\rho(R - \delta\rho)}\\
    0 & 0 & 0 & \dots & 0 & -\frac{2}{\delta\rho^2} & \frac{2}{\delta\rho^2} \\
\end{bmatrix}
\]
and
\[
P_k = \mathrm{diag}\left(
\frac{k}{r^2} \left[k + \mathrm{i}\tan{\alpha}\frac{2r-\delta\rho}{\delta\rho}\right], \frac{k^2}{(r + \delta\rho)^2},
\ldots,
\frac{k^2}{(R - \delta\rho)^2},
\frac{k}{R^2} \left[k - \mathrm{i}\tan{\alpha}\frac{2R+\delta\rho}{\delta\rho}\right]
\right).
\]

\subsection{Linear Schr\"odinger equation}
To evaluate the nonlinear term of the equation, we need to solve an initial-boundary value problem for 
\[
    \frac{\partial A}{\partial z} + \mathrm{i} \Delta A = 0, \qquad A(z = 0) = A_0
\]
from $z = 0$ up until $z = z_0$. We use the following higher-order scheme to propagate along $\delta z$
\[
    A_{+} = \left(I + \mathrm{i} \frac{\delta z}{2} \Delta\right)^{-1} \left(I - \mathrm{i} \frac{\delta z}{2} \Delta\right) A_{-}
\]
since it is crucial to have an accurate value of the nonlinearity.

\subsection{Main diffusion equation}
To make long-time simulations we use the standard implicit Euler scheme. Because of the delay, the nonlinearity need not be approximated with iterations. To produce figures for this paper, we used $N_\theta = 256$ and $N_\rho = 128$; per each delay interval $[nT, (n+1)T]$, we made $N_T = 180$ steps for rotating parameters and $N_T = 60$ for standing parameters (see Table \ref{tab:params}).

\bibliography{zotero} 

\begin{thebibliography}{65}
\newcommand{\enquote}[1]{``#1''}
\providecommand{\natexlab}[1]{#1}
\providecommand{\url}[1]{\texttt{#1}}
\providecommand{\urlprefix}{URL }
\expandafter\ifx\csname urlstyle\endcsname\relax
  \providecommand{\doi}[1]{doi:\discretionary{}{}{}#1}\else
  \providecommand{\doi}{doi:\discretionary{}{}{}\begingroup
  \urlstyle{rm}\Url}\fi

\bibitem[{Adachihara \& Faid(1993)}]{AdachiharaFaidTwodimensional1993}
Adachihara, H. \& Faid, H. [1993] \enquote{Two-dimensional
  nonlinear-interferometer pattern analysis and decay of spirals,} \emph{JOSA
  B} \textbf{10},  1242--1253, \doi{10.1364/JOSAB.10.001242}.

\bibitem[{Agmon(1962)}]{Agmoneigenfunctions1962}
Agmon, S. [1962] \enquote{On the eigenfunctions and on the eigenvalues of
  general elliptic boundary value problems,} \emph{Communications on Pure and
  Applied Mathematics} \textbf{15},  119--147, \doi{10.1002/cpa.3160150203}.

\bibitem[{Akhmanov \emph{et~al.}(1992)Akhmanov, Vorontsov, Ivanov, Larichev \&
  Zheleznykh}]{AkhmanovEtAlControlling1992}
Akhmanov, S.~A., Vorontsov, M.~A., Ivanov, V.~Y., Larichev, A.~V. \&
  Zheleznykh, N.~I. [1992] \enquote{Controlling transverse-wave interactions in
  nonlinear optics: Generation and interaction of spatiotemporal structures,}
  \emph{Journal of the Optical Society of America B} \textbf{9},  78,
  \doi{10.1364/JOSAB.9.000078}.

\bibitem[{Barkley(1992)}]{BarkleyLinear1992}
Barkley, D. [1992] \enquote{Linear stability analysis of rotating spiral waves
  in excitable media,} \emph{Physical Review Letters} \textbf{68},  2090--2093,
  \doi{10.1103/PhysRevLett.68.2090}.

\bibitem[{Barkley(1994)}]{BarkleyEuclidean1994}
Barkley, D. [1994] \enquote{Euclidean symmetry and the dynamics of rotating
  spiral waves,} \emph{Physical Review Letters} \textbf{72},  164--167,
  \doi{10.1103/PhysRevLett.72.164}.

\bibitem[{Barkley \& Kevrekidis(1994)}]{BarkleyKevrekidisdynamical1994}
Barkley, D. \& Kevrekidis, I.~G. [1994] \enquote{A dynamical systems approach
  to spiral wave dynamics,} \emph{Chaos: An Interdisciplinary Journal of
  Nonlinear Science} \textbf{4},  453--460, \doi{10.1063/1.166023}.

\bibitem[{Barkley \emph{et~al.}(1990)Barkley, Kness \&
  Tuckerman}]{BarkleyEtAlSpiralwave1990}
Barkley, D., Kness, M. \& Tuckerman, L.~S. [1990] \enquote{Spiral-wave dynamics
  in a simple model of excitable media: {{The}} transition from simple to
  compound rotation,} \emph{Physical Review A} \textbf{42},  2489--2492,
  \doi{10.1103/PhysRevA.42.2489}.

\bibitem[{Biktashev \emph{et~al.}(2010)Biktashev, Barkley \&
  Biktasheva}]{BiktashevEtAlOrbital2010}
Biktashev, V.~N., Barkley, D. \& Biktasheva, I.~V. [2010] \enquote{Orbital
  {{Motion}} of {{Spiral Waves}} in {{Excitable Media}},} \emph{Physical Review
  Letters} \textbf{104},  058302, \doi{10.1103/PhysRevLett.104.058302}.

\bibitem[{Biktasheva(2000)}]{BiktashevaDrift2000}
Biktasheva, I.~V. [2000] \enquote{Drift of spiral waves in the complex
  {{Ginzburg}}-{{Landau}} equation due to media inhomogeneities,}
  \emph{Physical Review E} \textbf{62},  8800--8803,
  \doi{10.1103/PhysRevE.62.8800}.

\bibitem[{Biktasheva \emph{et~al.}(2009)Biktasheva, Barkley, Biktashev,
  Bordyugov \& Foulkes}]{BiktashevaEtAlComputation2009}
Biktasheva, I.~V., Barkley, D., Biktashev, V.~N., Bordyugov, G.~V. \& Foulkes,
  A.~J. [2009] \enquote{Computation of the response functions of spiral waves
  in active media,} \emph{Physical Review E} \textbf{79},  056702,
  \doi{10.1103/PhysRevE.79.056702}.

\bibitem[{Biktasheva \emph{et~al.}(2010)Biktasheva, Barkley, Biktashev \&
  Foulkes}]{BiktashevaEtAlComputation2010}
Biktasheva, I.~V., Barkley, D., Biktashev, V.~N. \& Foulkes, A.~J. [2010]
  \enquote{Computation of the drift velocity of spiral waves using response
  functions,} \emph{Physical Review E} \textbf{81},  066202,
  \doi{10.1103/PhysRevE.81.066202}.

\bibitem[{Biktasheva \& Biktashev(2001)}]{BiktashevaBiktashevResponse2001}
Biktasheva, I.~V. \& Biktashev, V.~N. [2001] \enquote{Response {{Functions}} of
  {{Spiral Wave Solutions}} of the {{Complex Ginzburg}}\textendash{{Landau
  Equation}},} \emph{Journal of Nonlinear Mathematical Physics} \textbf{8},
  28--34, \doi{10.2991/jnmp.2001.8.s.6}.

\bibitem[{Biktasheva \& Biktashev(2003)}]{BiktashevaBiktashevWaveparticle2003}
Biktasheva, I.~V. \& Biktashev, V.~N. [2003] \enquote{Wave-particle dualism of
  spiral waves dynamics,} \emph{Physical Review E} \textbf{67},  026221,
  \doi{10.1103/PhysRevE.67.026221}.

\bibitem[{Biktasheva \emph{et~al.}(1998)Biktasheva, Elkin \&
  Biktashev}]{BiktashevaEtAlLocalized1998}
Biktasheva, I.~V., Elkin, Y.~E. \& Biktashev, V.~N. [1998] \enquote{Localized
  sensitivity of spiral waves in the complex {{Ginzburg}}-{{Landau}} equation,}
  \emph{Physical Review E} \textbf{57},  2656--2659,
  \doi{10.1103/PhysRevE.57.2656}.

\bibitem[{Budzinskiy(2019)}]{BudzinskiyZeros2019}
Budzinskiy, S. [2019] \enquote{Zeros of {{Bessel}} cross-products coming from
  oblique derivative boundary value problems,} \emph{arXiv:1909.00293 [math]} .

\bibitem[{Budzinskiy \&
  Razgulin(2017{\natexlab{a}})}]{BudzinskiyRazgulinNormal2017}
Budzinskiy, S. \& Razgulin, A. [2017{\natexlab{a}}] \enquote{Normal form of
  {{O}} ( 2 ) {{Hopf}} bifurcation in a model of a nonlinear optical system
  with diffraction and delay,} \emph{Electronic Journal of Qualitative Theory
  of Differential Equations} ,  1--12\doi{10.14232/ejqtde.2017.1.50}.

\bibitem[{Budzinskiy \emph{et~al.}(2018)Budzinskiy, Larichev \&
  Razgulin}]{BudzinskiyEtAlReducing2018}
Budzinskiy, S.~S., Larichev, A.~V. \& Razgulin, A.~V. [2018] \enquote{Reducing
  dimensionality to model {{2D}} rotating and standing waves in a delayed
  nonlinear optical system with thin annulus aperture,} \emph{Nonlinear
  Analysis: Real World Applications} \textbf{44},  559--572,
  \doi{10.1016/j.nonrwa.2018.06.003}.

\bibitem[{Budzinskiy \&
  Razgulin(2017{\natexlab{b}})}]{BudzinskiyRazgulinRotating2017}
Budzinskiy, S.~S. \& Razgulin, A.~V. [2017{\natexlab{b}}] \enquote{Rotating and
  standing waves in a diffractive nonlinear optical system with delayed
  feedback under {{O}}(2) {{Hopf}} bifurcation,} \emph{Communications in
  Nonlinear Science and Numerical Simulation} \textbf{49},  17--29,
  \doi{10.1016/j.cnsns.2017.01.031}.

\bibitem[{Chesnokov \emph{et~al.}(2002)Chesnokov, Rybak \&
  Stadnichuk}]{ChesnokovEtAlTimedelayed2002}
Chesnokov, S.~S., Rybak, A.~A. \& Stadnichuk, V.~I. [2002]
  \enquote{Time-delayed nonlinear optical systems: Temporal instability and
  cooperative chaotic dynamics,}  \emph{{{ICONO}} 2001: {{Nonlinear Optical
  Phenomena}} and {{Nonlinear Dynamics}} of {{Optical Systems}}}, eds.
  Drabovich, K.~N., Kazak, N.~S., Makarov, V.~A. \& Voitovich, A.~P., pp.
  493--498, \doi{10.1117/12.475952}.

\bibitem[{Cohen \emph{et~al.}(1978)Cohen, Neu \&
  Rosales}]{CohenEtAlRotating1978}
Cohen, D., Neu, J. \& Rosales, R. [1978] \enquote{Rotating {{Spiral Wave
  Solutions}} of {{Reaction}}-{{Diffusion Equations}},} \emph{SIAM Journal on
  Applied Mathematics} \textbf{35},  536--547, \doi{10.1137/0135045}.

\bibitem[{Davidenko \emph{et~al.}(1992)Davidenko, Pertsov, Salomonsz, Baxter \&
  Jalife}]{DavidenkoEtAlStationary1992}
Davidenko, J.~M., Pertsov, A.~V., Salomonsz, R., Baxter, W. \& Jalife, J.
  [1992] \enquote{Stationary and drifting spiral waves of excitation in
  isolated cardiac muscle,} \emph{Nature} \textbf{355},  349--351,
  \doi{10.1038/355349a0}.

\bibitem[{Dellnitz \emph{et~al.}(1995)Dellnitz, Golubitsky, Hohmann \&
  Stewart}]{DellnitzEtAlSpirals1995}
Dellnitz, M., Golubitsky, M., Hohmann, A. \& Stewart, I. [1995]
  \enquote{Spirals in scalar reaction\textendash{}diffusion equations,}
  \emph{International Journal of Bifurcation and Chaos} \textbf{05},
  1487--1501, \doi{10.1142/S0218127495001149}.

\bibitem[{Duffy \emph{et~al.}(1980)Duffy, Britton \&
  Murray}]{DuffyEtAlSpiral1980}
Duffy, M., Britton, N. \& Murray, J. [1980] \enquote{Spiral {{Wave Solutions}}
  of {{Practical Reaction}}-{{Diffusion Systems}},} \emph{SIAM Journal on
  Applied Mathematics} \textbf{39},  8--13, \doi{10.1137/0139002}.

\bibitem[{Engel \& Nagel(2006)}]{EngelNagelshort2006}
Engel, K.-J. \& Nagel, R. [2006] \emph{A Short Course on Operator Semigroups}
  ({Springer}, {New York}), ISBN 978-0-387-31341-2 978-0-387-36619-7, oCLC:
  915956888.

\bibitem[{Faria(2000)}]{FariaNormal2000}
Faria, T. [2000] \enquote{Normal forms for semilinear functional differential
  equations in {{Banach}} spaces and applications. {{Part II}},} \emph{Discrete
  and Continuous Dynamical Systems} \textbf{7},  155--176,
  \doi{10.3934/dcds.2001.7.155}.

\bibitem[{Fiedler \emph{et~al.}(1996)Fiedler, Sandstede, Scheel \&
  Wulff}]{FiedlerEtAlBifurcation1996}
Fiedler, B., Sandstede, B., Scheel, A. \& Wulff, C. [1996] \enquote{Bifurcation
  from {{Relative Equilibria}} of {{Noncompact Group Actions}}: {{Skew
  Products}}, {{Meanders}}, and {{Drifts}},} \emph{Documenta Mathematica}
  \textbf{1},  479--505.

\bibitem[{Golubitsky \emph{et~al.}(2000)Golubitsky, Knobloch \&
  Stewart}]{GolubitskyEtAlTarget2000}
Golubitsky, M., Knobloch, E. \& Stewart, I. [2000] \enquote{Target {{Patterns}}
  and {{Spirals}} in {{Planar Reaction}}-{{Diffusion Systems}},} \emph{Journal
  of Nonlinear Science} \textbf{10},  333--354, \doi{10.1007/s003329910013}.

\bibitem[{Golubitsky \emph{et~al.}(1997)Golubitsky, LeBlanc \&
  Melbourne}]{GolubitskyEtAlMeandering1997}
Golubitsky, M., LeBlanc, V.~G. \& Melbourne, I. [1997] \enquote{Meandering of
  the {{Spiral Tip}}: {{An Alternative Approach}},} \emph{Journal of Nonlinear
  Science} \textbf{7},  557--586, \doi{10.1007/s003329900040}.

\bibitem[{Greenberg(1976)}]{GreenbergPeriodic1976}
Greenberg, J. [1976] \enquote{Periodic {{Solutions}} to
  {{Reaction}}-{{Diffusion Equations}},} \emph{SIAM Journal on Applied
  Mathematics} \textbf{30},  199--205, \doi{10.1137/0130022}.

\bibitem[{Greenberg(1980)}]{GreenbergSpiral1980}
Greenberg, J. [1980] \enquote{Spiral {{Waves}} for {$\lambda-\omega $}
  {{Systems}},} \emph{SIAM Journal on Applied Mathematics} \textbf{39},
  301--309, \doi{10.1137/0139026}.

\bibitem[{Grill \emph{et~al.}(1995)Grill, Zykov \&
  M{\"u}ller}]{GrillEtAlFeedbackControlled1995}
Grill, S., Zykov, V.~S. \& M{\"u}ller, S.~C. [1995]
  \enquote{Feedback-{{Controlled Dynamics}} of {{Meandering Spiral Waves}},}
  \emph{Physical Review Letters} \textbf{75},  3368--3371,
  \doi{10.1103/PhysRevLett.75.3368}.

\bibitem[{Hale \& Raugel(1992)}]{HaleRaugelReactiondiffusion1992}
Hale, J. \& Raugel, G. [1992] \enquote{Reaction-diffusion equation on thin
  domains,} \emph{Journal de math{\'e}matiques pures et appliqu{\'e}es}
  \textbf{71},  33--95.

\bibitem[{Il'in \& Moiseev(1994)}]{IlinMoiseevabsence1994}
Il'in, V.~A. \& Moiseev, E.~I. [1994] \enquote{On the absence of the basis
  property of a system of root functions of a problem with an oblique
  derivative,} \emph{Differential Equations} \textbf{30},  128--143.

\bibitem[{Kan(2010)}]{KanPersistence2010}
Kan, T. [2010] \enquote{Persistence of the bifurcation structure for a
  semilinear elliptic problem on thin domains,} \emph{Nonlinear Analysis:
  Theory, Methods \& Applications} \textbf{73},  2941--2956,
  \doi{10.1016/j.na.2010.06.059}.

\bibitem[{Kostin \& Sherstyukov(2017)}]{KostinSherstyukovcomplex2017}
Kostin, A.~B. \& Sherstyukov, V.~B. [2017] \enquote{On complex roots of an
  equation arising in the oblique derivative problem,} \emph{Journal of
  Physics: Conference Series} \textbf{788},  012052,
  \doi{10.1088/1742-6596/788/1/012052}.

\bibitem[{Kostin \& Sherstyukov(2018)}]{KostinSherstyukovBasis2018}
Kostin, A.~B. \& Sherstyukov, V.~B. [2018] \enquote{The {{Basis Property}} of
  the {{System}} of {{Root Functions}} of the {{Oblique Derivative Problem}},}
  \emph{Doklady Mathematics} \textbf{98},  409--412,
  \doi{10.1134/S1064562418060030}.

\bibitem[{Landsberg \& Knobloch(1993)}]{LandsbergKnoblochNew1993}
Landsberg, A. \& Knobloch, E. [1993] \enquote{New types of waves in systems
  with {{O}}(2) symmetry,} \emph{Physics Letters A} \textbf{179},  316--324,
  \doi{10.1016/0375-9601(93)90685-S}.

\bibitem[{Langham \& Barkley(2013)}]{LanghamBarkleyNonspecular2013}
Langham, J. \& Barkley, D. [2013] \enquote{Non-specular reflections in a
  macroscopic system with wave-particle duality: {{Spiral}} waves in bounded
  media,} \emph{Chaos: An Interdisciplinary Journal of Nonlinear Science}
  \textbf{23},  013134, \doi{10.1063/1.4793783}.

\bibitem[{Langham \emph{et~al.}(2014)Langham, Biktasheva \&
  Barkley}]{LanghamEtAlAsymptotic2014}
Langham, J., Biktasheva, I. \& Barkley, D. [2014] \enquote{Asymptotic dynamics
  of reflecting spiral waves,} \emph{Physical Review E} \textbf{90},  062902,
  \doi{10.1103/PhysRevE.90.062902}.

\bibitem[{Mantel \& Barkley(1996)}]{MantelBarkleyPeriodic1996}
Mantel, R.-M. \& Barkley, D. [1996] \enquote{Periodic forcing of spiral waves
  in excitable media,} \emph{Physical Review E} \textbf{54},  4791--4802,
  \doi{10.1103/PhysRevE.54.4791}.

\bibitem[{Mikhailov \emph{et~al.}(1994)Mikhailov, Davydov \&
  Zykov}]{MikhailovEtAlComplex1994}
Mikhailov, A.~S., Davydov, V.~A. \& Zykov, V.~S. [1994] \enquote{Complex
  dynamics of spiral waves and motion of curves,} \emph{Physica D: Nonlinear
  Phenomena} \textbf{70},  1--39, \doi{10.1016/0167-2789(94)90054-X}.

\bibitem[{Mikhailov \& Krinsky(1983)}]{MikhailovKrinskyRotating1983}
Mikhailov, A.~S. \& Krinsky, V.~I. [1983] \enquote{Rotating spiral waves in
  excitable media: {{The}} analytical results,} \emph{Physica D: Nonlinear
  Phenomena} \textbf{9},  346--371, \doi{10.1016/0167-2789(83)90277-4}.

\bibitem[{Mikhailov \& Zykov(1991)}]{MikhailovZykovKinematical1991}
Mikhailov, A.~S. \& Zykov, V.~S. [1991] \enquote{Kinematical theory of spiral
  waves in excitable media: {{Comparison}} with numerical simulations,}
  \emph{Physica D: Nonlinear Phenomena} \textbf{52},  379--397,
  \doi{10.1016/0167-2789(91)90134-U}.

\bibitem[{Panfilov \emph{et~al.}(2000)Panfilov, M{\"u}ller, Zykov \&
  Keener}]{PanfilovEtAlElimination2000}
Panfilov, A.~V., M{\"u}ller, S.~C., Zykov, V.~S. \& Keener, J.~P. [2000]
  \enquote{Elimination of spiral waves in cardiac tissue by multiple electrical
  shocks,} \emph{Physical Review E} \textbf{61},  4644--4647,
  \doi{10.1103/PhysRevE.61.4644}.

\bibitem[{Raugel(1995)}]{RaugelDynamics1995}
Raugel, G. [1995] \enquote{Dynamics of partial differential equations on thin
  domains,}  \emph{Dynamical {{Systems}}}, ed. Johnson, R., no. 1609  Lecture
  {{Notes}} in {{Mathematics}} ({Springer Berlin Heidelberg}), ISBN
  978-3-540-60047-3 978-3-540-49415-7, pp. 208--315, \doi{10.1007/BFb0095241}.

\bibitem[{Sandstede \&
  Scheel(2000{\natexlab{a}})}]{SandstedeScheelAbsolute2000}
Sandstede, B. \& Scheel, A. [2000{\natexlab{a}}] \enquote{Absolute and
  convective instabilities of waves on unbounded and large bounded domains,}
  \emph{Physica D: Nonlinear Phenomena} \textbf{145},  233--277,
  \doi{10.1016/S0167-2789(00)00114-7}.

\bibitem[{Sandstede \&
  Scheel(2000{\natexlab{b}})}]{SandstedeScheelAbsolute2000a}
Sandstede, B. \& Scheel, A. [2000{\natexlab{b}}] \enquote{Absolute versus
  convective instability of spiral waves,} \emph{Physical Review E}
  \textbf{62},  7708--7714, \doi{10.1103/PhysRevE.62.7708}.

\bibitem[{Sandstede \& Scheel(2007)}]{SandstedeScheelPeriodDoubling2007}
Sandstede, B. \& Scheel, A. [2007] \enquote{Period-{{Doubling}} of {{Spiral
  Waves}} and {{Defects}},} \emph{SIAM Journal on Applied Dynamical Systems}
  \textbf{6},  494--547, \doi{10.1137/060668158}.

\bibitem[{Sandstede \emph{et~al.}(1997{\natexlab{a}})Sandstede, Scheel \&
  Wulff}]{SandstedeEtAlCentermanifold1997}
Sandstede, B., Scheel, A. \& Wulff, C. [1997{\natexlab{a}}]
  \enquote{Center-manifold reduction for spiral waves,} \emph{Comptes Rendus de
  l'Acad{\'e}mie des Sciences - Series I - Mathematics} \textbf{324},
  153--158, \doi{10.1016/S0764-4442(99)80335-8}.

\bibitem[{Sandstede \emph{et~al.}(1997{\natexlab{b}})Sandstede, Scheel \&
  Wulff}]{SandstedeEtAlDynamics1997}
Sandstede, B., Scheel, A. \& Wulff, C. [1997{\natexlab{b}}] \enquote{Dynamics
  of {{Spiral Waves}} on {{Unbounded Domains Using Center}}-{{Manifold
  Reductions}},} \emph{Journal of Differential Equations} \textbf{141},
  122--149, \doi{10.1006/jdeq.1997.3326}.

\bibitem[{Sandstede \emph{et~al.}(1999)Sandstede, Scheel \&
  Wulff}]{SandstedeEtAlBifurcations1999}
Sandstede, B., Scheel, A. \& Wulff, C. [1999] \enquote{Bifurcations and
  {{Dynamics}} of {{Spiral Waves}},} \emph{Journal of Nonlinear Science}
  \textbf{9},  439--478, \doi{10.1007/s003329900076}.

\bibitem[{Scheel(1998)}]{ScheelBifurcation1998}
Scheel, A. [1998] \enquote{Bifurcation to {{Spiral Waves}} in
  {{Reaction}}-{{Diffusion Systems}},} \emph{SIAM Journal on Mathematical
  Analysis} \textbf{29},  1399--1418, \doi{10.1137/S0036141097318948}.

\bibitem[{Steinbock \emph{et~al.}(1993)Steinbock, Zykov \&
  M{\"u}ller}]{SteinbockEtAlControl1993}
Steinbock, O., Zykov, V. \& M{\"u}ller, S.~C. [1993] \enquote{Control of
  spiral-wave dynamics in active media by periodic modulation of excitability,}
  \emph{Nature} \textbf{366},  322, \doi{10.1038/366322a0}.

\bibitem[{Tyson \& Keener(1988)}]{TysonKeenerSingular1988}
Tyson, J.~J. \& Keener, J.~P. [1988] \enquote{Singular perturbation theory of
  traveling waves in excitable media (a review),} \emph{Physica D: Nonlinear
  Phenomena} \textbf{32},  327--361, \doi{10.1016/0167-2789(88)90062-0}.

\bibitem[{Vorontsov \& Firth(1994)}]{VorontsovFirthPattern1994}
Vorontsov, M.~A. \& Firth, W.~J. [1994] \enquote{Pattern formation and
  competition in nonlinear optical systems with two-dimensional feedback,}
  \emph{Physical Review A} \textbf{49},  2891--2906,
  \doi{10.1103/PhysRevA.49.2891}.

\bibitem[{Vorontsov \emph{et~al.}(1998)Vorontsov, Miller \&
  Haken}]{VorontsovEtAlSelfOrganization1998}
Vorontsov, M.~A., Miller, W.~B. \& Haken, H. (eds.) [1998]
  \emph{Self-{{Organization}} in {{Optical Systems}} and {{Applications}} in
  {{Information Technology}}}, Springer {{Series}} in {{Synergetics}}, Vol.~66
  ({Springer Berlin Heidelberg}, {Berlin, Heidelberg}), ISBN 978-3-540-64125-4
  978-3-642-60315-0.

\bibitem[{Wheeler \& Barkley(2006)}]{WheelerBarkleyComputation2006}
Wheeler, P. \& Barkley, D. [2006] \enquote{Computation of {{Spiral Spectra}},}
  \emph{SIAM Journal on Applied Dynamical Systems} \textbf{5},  157--177,
  \doi{10.1137/050624273}.

\bibitem[{Winfree(1972)}]{WinfreeSpiral1972}
Winfree, A.~T. [1972] \enquote{Spiral {{Waves}} of {{Chemical Activity}},}
  \emph{Science} \textbf{175},  634--636, \doi{10.1126/science.175.4022.634}.

\bibitem[{Winfree(2001)}]{WinfreeGeometry2001}
Winfree, A.~T. [2001] \emph{The {{Geometry}} of {{Biological Time}}} ({Springer
  Science \& Business Media}), ISBN 978-0-387-98992-1.

\bibitem[{Zheleznykh \emph{et~al.}(1994)Zheleznykh, Le~Berre, Ressayre \&
  Tallet}]{ZheleznykhEtAlRotating1994}
Zheleznykh, N.~I., Le~Berre, M., Ressayre, E. \& Tallet, A. [1994]
  \enquote{Rotating spiral waves in a nonlinear optical system with spatial
  interactions,} \emph{Chaos, Solitons \& Fractals} \textbf{4},  1717--1728,
  \doi{10.1016/0960-0779(94)90106-6}.

\bibitem[{Zykov \emph{et~al.}(2004)Zykov, Bordiougov, Brandtst{\"a}dter, Gerdes
  \& Engel}]{ZykovEtAlGlobal2004}
Zykov, V.~S., Bordiougov, G., Brandtst{\"a}dter, H., Gerdes, I. \& Engel, H.
  [2004] \enquote{Global {{Control}} of {{Spiral Wave Dynamics}} in an
  {{Excitable Domain}} of {{Circular}} and {{Elliptical Shape}},}
  \emph{Physical Review Letters} \textbf{92},  018304,
  \doi{10.1103/PhysRevLett.92.018304}.

\bibitem[{Zykov \emph{et~al.}(2005)Zykov, Brandtst{\"a}dter, Bordiougov \&
  Engel}]{ZykovEtAlInterference2005}
Zykov, V.~S., Brandtst{\"a}dter, H., Bordiougov, G. \& Engel, H. [2005]
  \enquote{Interference patterns in spiral wave drift induced by a two-point
  feedback,} \emph{Physical Review E} \textbf{72},  065201,
  \doi{10.1103/PhysRevE.72.065201}.

\bibitem[{Zykov \& Engel(2004)}]{ZykovEngelFeedbackmediated2004}
Zykov, V.~S. \& Engel, H. [2004] \enquote{Feedback-mediated control of spiral
  waves,} \emph{Physica D: Nonlinear Phenomena} \textbf{199},  243--263,
  \doi{10.1016/j.physd.2004.10.001}.

\bibitem[{Zykov \emph{et~al.}(1997)Zykov, Mikhailov \&
  M{\"u}ller}]{ZykovEtAlControlling1997}
Zykov, V.~S., Mikhailov, A.~S. \& M{\"u}ller, S.~C. [1997] \enquote{Controlling
  {{Spiral Waves}} in {{Confined Geometries}} by {{Global Feedback}},}
  \emph{Physical Review Letters} \textbf{78},  3398--3401,
  \doi{10.1103/PhysRevLett.78.3398}.

\bibitem[{Zykov \emph{et~al.}(1994)Zykov, Steinbock \&
  M{\"u}ller}]{ZykovEtAlExternal1994}
Zykov, V.~S., Steinbock, O. \& M{\"u}ller, S.~C. [1994] \enquote{External
  forcing of spiral waves,} \emph{Chaos: An Interdisciplinary Journal of
  Nonlinear Science} \textbf{4},  509--518, \doi{10.1063/1.166029}.

\end{thebibliography}
\bibliographystyle{plain}
\end{document}